 \documentclass[a4paper,onefignum,onetabnum]{siamart190516}
\usepackage{amssymb, amsmath}
\usepackage{graphicx}
\usepackage{bm}
 \usepackage{cite}


\def\pra{Phys. Rev. A}

\def\prl{Phys. Rev. Lett.}

\headers{Friedrich-Wintgen bound states in the 
  continuum}{Xiuchen Yu and Ya  Yan Lu} 

\title{Existence of Friedrich-Wintgen bound states in the 
  continuum: system of Schr\"{o}dinger 
  equations\thanks{Submitted to the editors xxx. The main result of
    this paper was presented in 16th International Conference on
    Mathematical and Numerical Aspects of Wave Propagation, June 30 -
    July 5, 2024, with a two-page abstract entitled ``On the existence
    of Friedrich-Wintgen bound states in the continuum.''
\funding{This work was supported by 
the Research Grants Council of Hong Kong
Special Administrative Region, China, under project no.~CityU
11309823.}}}

\author{Xiuchen Yu\thanks{Department of Mathematics, City University
    of Hong Kong, Hong Kong   (\email{yu.xiuchen@my.cityu.edu.hk}).}
\and Ya Yan Lu\thanks{Department of Mathematics, City University of
  Hong Kong, Hong Kong
  (\email{mayylu@cityu.edu.hk}).}}

\begin{document}
\maketitle

\section*{Abstract}
A bound state in the continuum (BIC) is an eigenmode with the
corresponding eigenvalue embedded in the continuous spectrum. 
There is currently a significant research interest on BICs in the
photonics community, because they can be used to induce strong resonances that
are useful for lasing, sensing, harmonic generation, etc.  The existence of BICs
in classical or quantum wave systems has only been established for some
relatively simple cases such as BICs protected by symmetry. In
1985, Friedrich and Wintgen 
(Physical Review A, Vol. 32, pp. 3232-3242, 1985) suggested that
BICs may appear from the destructive interference of two resonances
coupled to a single radiation channel. They used a system of three  
one-dimensional Schr\"{o}dinger equations to illustrate this process.
Many BICs in classical wave systems seem to follow this mechanism and are now called
Friedrich-Wintgen BICs. However, Friedrich and Wintgen did not show
the existence of BICs in their system of three Schr\"{o}dinger equations. Instead, they 
approximated the original system by a model with one
Schr\"{o}dinger equation and two
algebraic equations, and only analyzed BICs in the
approximate model. In this paper, we give a rigorous justification for the
existence of BICs in the original system of three 1D
Schr\"{o}dinger equations.

\section{Introduction}

The concept of bound state in the continuum (BIC) was originally
introduced by von Neumann and Wigner in 1929 for a one-dimensional
(1D) Schr\"{o}dinger equation with an oscillatory
potential~\cite{neumann29}. If the potential in a
Schr\"{o}dinger equation decays to zero at
infinity, an ordinary bound state has a negative energy, while for any 
positive energy, scattering problems with incoming waves from infinity can be
formulated. The bound state constructed by von Neumann and Wigner has
a positive energy, even though the potential decays to zero at
infinity~\cite{still75}. Mathematically, a BIC corresponds to an eigenvalue embedded
in the continuous spectrum, and consequently, the associated scattering problem
loses uniqueness~\cite{fonda63,bonnet94,evans94,mciver96,linton97}. In
addition to Schr\"{o}dinger equations~\cite{fonda63,still75,fried85},
BICs have been found in many classical wave systems governing by the Helmholtz
equation~\cite{bonnet94,evans94,shipman03,porter05,shipman07,lyap15,mai05}, the Maxwell's
equations~\cite{hsu13_2,jin19,yuan21rob}, etc. In recent years, BICs have found many 
important applications in photonics~\cite{Review16,Sadr21,kosh23}.
This is mainly because by perturbing the structure with a BIC  or
selecting a wavevector near the wavevector of the BIC, arbitrarily strong resonances appear~\cite{yuan20,nan25prl}, giving
rise to local field enhancement~\cite{hyl20} and sharp features in
scattering spectra~\cite{shipman12,yzl22} that are useful for
lasing~\cite{kodi17,hwang21}, sensing~\cite{leit19}, harmonic
generation~\cite{kosh19,yuan_siam}, etc.   

There are several different types of BICs. A symmetry mismatch between an
eigenmode and related propagating waves (to or from infinity) gives rise to
symmetry protected BICs~\cite{bonnet94,evans94,shipman07}.
Phase-matched reflections between two scatterers can sometimes
completely trap waves locally forming the so-called Fabry-Perot BICs~\cite{mciver96,linton97,mari08,ches19,mai05}. 
In a seminal work~\cite{fried85}, Friedrich and Wintgen studied a system of three 1D
Schr\"{o}dinger equations, and suggested that BICs may form due to the
destructive interference of two resonances coupled to a single
radiation channel. Since then, many BICs are identified
as the result of destructive interference of two resonances and they are now
called Friedrich-Wintgen BICs (FW-BICs). Typically, a FW-BIC is obtained by
perturbing a system with two bound 
states of which the eigenvalues have a crossing. More
precisely, it is assumed that the two bound states depend on a parameter
$s$, their eigenvalue curves cross at a special parameter value $s_0$, and
the perturbation induces coupling with a radiation channel, so that the two bound states become
resonant states with a complex frequency in general, but at a
particular value of $s$ near $s_0$, one resonant state is actually a bound
state with a real eigenvalue, and it is the FW-BIC. The above scenario
is general and observed in many physical systems~\cite{Review16,lyap15,Sadr21}. 
However, even for the original system of three 1D Schr\"{o}dinger
equations, Friedrich and Wintgen did not provide a justification for
the existence of BICs. In fact,  
they approximated the original system of three Schr\"{o}dinger
equations by a model with
one Schr\"{o}dinger equation and two algebraic equations, and only
analyzed the BICs in the approximate model.

In this paper, we show  that under proper conditions, FW-BICs indeed exist 
in  the system of three 1D Schr\"{o}dinger equations. We
assume that the governing system is a perturbation of an
uncoupled system with two bound states having a crossing in their
eigenvalue curves, and the perturbation induces coupling with a single
radiation channel. In addition, numerical examples are presented to 
support our theory. For comparison, we also show numerical results for
a BIC in the approximate model of  Friedrich and Wintgen. As expected, the BICs in the
original Schr\"{o}dinger system and the approximate model are different. 
The rest of this paper is organized as follows. In section 2, we
recall some basic concepts including BICs, resonant states and
scattering solutions. In section 3, we give a brief summary for
Friedrich and Wintgen's work. Numerical results for both the original 
Schr\"{o}dinger system and the approximate model are presented in
section 4. In section 5, we study boundary value problems for uncoupled
Schr\"{o}dinger equations with given inhomogeneous terms. Section 6
contains our theory on the existence of FW-BICs in the Schr\"{o}dinger system. The paper is concluded
with a brief summary in section 7. 

\section{Basic concepts}

In this section, we recall some basic concepts including bound states,
resonant states, scattering solutions, and BICs.  Our starting point is the 
following system of three 1D Schr\"{o}dinger
equations: 
\begin{equation}
\label{eq:s}
-{\bm y}^{\prime\prime} +V(x) {\bm y} = \lambda {\bm y}, \qquad x>0,
\end{equation}
where $V$ (the potential) is a $3\times 3$ real symmetric matrix
function of $x$, $\lambda$ (the frequency or energy) is a scalar, and 
$ {\bm y} =  \left[ y_0, \ y_1,  \ y_2 \right]^{\sf T}$ is a column
vector depending on $x$. 
We consider the system for $x>0$ only and impose 
the following boundary condition: 
\begin{equation}
  \label{bcat0}
  {\bm y}^\prime(0) = {\bf 0}.
\end{equation}
In addition, we assume there is a constant $L$, such that 
\begin{equation}
  \label{Vinf}
  V(x) =  \mbox{diag}\{ 0, 1, 1 \} = 
  \begin{bmatrix}
    0  && \cr & 1 & \cr && 1
  \end{bmatrix}, \quad x > L.
\end{equation}
More generally, if for sufficiently large $x$, $V$ is an
$x$-independent real symmetric matrix with two equal eigenvalues and
one eigenvalue smaller than these two, we can  always reduce $V$ to the above 
form by redefining ${\bm y}$, shifting $\lambda$,  and scaling $x$.

For Eq.~(\ref{eq:s}) with boundary condition (\ref{bcat0}) and $V$
satisfying Eq.~(\ref{Vinf}), we can 
formulate both boundary-value and eigenvalue problems. 
For $\lambda \in (0, 1)$, we can formulate a scattering problem by
specifying a propagating incident wave from $x=+\infty$ for $y_0$,   that is 
\begin{equation}
\label{def:inc}
   {\bm y}^{(i)}(x) = \begin{bmatrix}
A e^{-{\sf i}\sqrt{\lambda}x} \\
0\\
0 
\end{bmatrix}, \quad x>L, 
\end{equation}
where $A$ is the amplitude of the incident wave. This scattering
problem is a boundary value problem (BVP). For $x> L$, its solution can be written as 
\begin{equation}
  \label{eq:1}
{\bm y}(x) = {\bm y}^{(i)}(x) + A \begin{bmatrix}
  R_0e^{{\sf i} \sqrt{\lambda}x}\\
T_1e^{-\sqrt{1-\lambda} x}\\
T_2e^{-\sqrt{1-\lambda} x}
\end{bmatrix}, 
\end{equation}
where $R_0$ is the reflection coefficient, $T_1$ and $T_2$ are
coefficients of the exponentially decaying $y_1$ and
$y_2$ (evanescent waves). 
It is easy to show that  $|R_0| =1$. 
Let $R_0 = e^{{\sf i}\theta_0}$ and $A =(1/2) e^{- {\sf i}\theta_0/2}$, then  
\begin{equation}
  \label{y0cos}
  y_0(x) = \cos(\sqrt{\lambda} x+\theta_0/2), \quad x > L.
\end{equation}
Since the real
part of ${\bm y}$, i.e., $\mbox{Re}({\bm y})$, solves the same
scattering problem, we can assume ${\bm y}$ is real.
If this scattering problem has a unique solution,
then ${\bm y}$ must be real. But as we will see below, the
scattering problem does not always have a unique solution. 


If Eqs.~(\ref{eq:s}) and (\ref{bcat0}) are supplemented with the 
boundary condition 
\begin{equation}
\label{bc:s}
{\bm y}(x)  \to {\bf 0}, \quad\text{as}\quad x\to+\infty, 
\end{equation}
we have an eigenvalue problem. A nontrivial solution of (\ref{eq:s}),
(\ref{bcat0}) and (\ref{bc:s}) is a bound state. Normally, a bound state has a
negative energy, i.e., $\lambda <0$. In that case, ${\bm y}$ satisfies 
\begin{equation}
\label{eq:bds}
{\bm y}(x) = \begin{bmatrix}
K_0 e^{ {\sf i}  \sqrt{\lambda}x} \\
K_1e^{ - \sqrt{1-\lambda}x}\\
K_2e^{ - \sqrt{1-\lambda}x} 
\end{bmatrix},
\quad \ x>L, 
\end{equation}
for some constants  $K_0$, $K_1$, $K_2$, and $ {\sf i} \sqrt{\lambda} = -
\sqrt{-\lambda}$. We are interested in bound states with $\lambda >
0$. Such a bound state is a BIC, since it coexists with scattering
solutions that do not decay as $x \to +\infty$. Importantly,  if there is a BIC with 
$\lambda \in (0, 1)$, then Eq.~(\ref{eq:bds}) is still valid, but we must have $K_0=0$, namely $y_0(x) \equiv  0$
for $x > L$. Furthermore, for the scattering problem with the same $\lambda$ as
the BIC, any solution plus a multiple of the BIC is also a solution,
thus, the scattering problem loses uniqueness.
Notice that there is no BIC with $\lambda \ge 1$. If  ${\bm y}$ satisfies (\ref{bc:s}) and $\lambda \ge 1$, then
$y_1$ and $y_2$ must also vanish for all $x > L$. This leads to an initial
value problem for  the system of Schr\"{o}dinger equations with initial
conditions ${\bm y} = {\bm y}' = {\bm 0}$ at
$x=L$. By the uniqueness of the initial value problem, we have ${\bm y}(x)
\equiv {\bm 0}$ for all $x>0$.  


Instead of (\ref{bc:s}), we can look for nontrivial solutions of 
Eq.~(\ref{eq:s}) satisfying  boundary conditions (\ref{bcat0}) and (\ref{eq:bds}).
The bound states 
with $\lambda < 0$ and BICs with $\lambda \in (0, 1)$ 
indeed satisfy (\ref{eq:bds}). For real $\lambda >
0$, condition (\ref{eq:bds}) implies that the wave is outgoing in the 
$y_0$ component.  It is easy to show that there are no solutions
satisfying (\ref{eq:bds}) with $\lambda > 0$ and $K_0 \ne 
0$. However, system (\ref{eq:s}), (\ref{bcat0}) and (\ref{eq:bds}) 
can have solutions with a complex $\lambda$ and a 
nonzero $K_0$, and they are the so-called resonant states. The complex
$\lambda$ of a resonant state should have positive real part and a
negative imaginary part, such that $e^{ {\sf i} \sqrt{\lambda} x}$ is an outgoing
wave with an  increasing amplitude and $e^{ - \sqrt{1-\lambda} x}$ is an
evanescent wave with exponentially decaying amplitude as $x \to
+\infty$. The fact that $\mbox{Im}(\lambda) < 0$ implies that the resonant
state decays with time, as it radiates out energy through the $y_0$
component. A BIC can be regarded as a special resonant state with a real
$\lambda$. 

In the following, we assume $V$ is a perturbation of a
diagonal matrix, namely,
\begin{equation}
  \label{DplusF}
  V(x) = D(x; s) + \delta F(x),
\end{equation}
where $D$ depends on a parameter $s$, $F$ is the perturbation profile
matrix, and $\delta$ is the amplitude of the perturbation. More precisely, 
\begin{equation}
  \label{def:v1}
  D(x;s) = 
\begin{bmatrix}
d_0(x;s)&0&0\\
0&d_1(x;s)&0\\
0&0&d_2(x;s) 
\end{bmatrix},
\quad
F(x) = \begin{bmatrix}
0 & \alpha(x)&\beta(x) \\
\alpha(x)&0 &\gamma(x) \\
 \beta(x)&\gamma(x)&0 
\end{bmatrix}, 
\end{equation}
where $d_0$, $d_1$, $d_2$, $\alpha$, $\beta$ and $\gamma$ are all 
piecewise smooth functions of $x$. In addition, to be consistent with (\ref{Vinf}), we assume 
\begin{equation}
  \label{DFinf}
  D(x; s) \equiv \mbox{diag}\{0, 1, 1\},  \quad F(x) \equiv {\bf 0}, \quad x 
  \ge L.  
\end{equation}

\section{Friedrich and Wintgen's work}

In a pioneering  work~\cite{fried85}, Friedrich and Wintgen studied
Eq.~(\ref{eq:s}) to suggest that BICs can form due to the destructive interference of two
resonances coupled to a single radiation channel. They identified a
condition under which Eq.~(\ref{eq:s}) with $V$ given in (\ref{DplusF}) has a  BIC. The condition is specified on solutions 
of the uncoupled system with $\delta=0$. It is assumed that the uncoupled
equations for $y_1$ and $y_2$ have bound states with frequency in $(0,1)$
for all $s$ in some interval  $I_s$. To avoid confusion, we denote the bound
state corresponding to $y_i$ by $\phi_i$ and its frequency by $\mu_i$ for
$i=1$, 2. Therefore, 
$\phi_i$  and $\mu_i$ satisfy 
\begin{equation}
\label{eq:unp}
\begin{cases}
&-\phi_i^{\prime\prime}+d_i(x;s)\phi_i = \mu_i\phi_i, \quad x > 0, \\
&\phi_i^\prime(0; s) = 0,\\
& \phi_i(x; s) \to 0,  \quad x\to +\infty. 
\end{cases}
\end{equation}
Notice that both $\phi_i$ and $\mu_i$
depend on parameter $s$
 and $\mu_i(s) \in (0, 1)$ for all $s \in I_s$. Friedrich and Wintgen
 suggested that if in the $s$-$\lambda$ plane, the two curves $\lambda = \mu_1(s)$ and $\lambda
 =\mu_2(s)$
 have a crossing at $s_0 \in  I_s$  and $\delta$ is small,  then for
 some $s$  near $s_0$,  Eq.~(\ref{eq:s}) 
 has a BIC with $\lambda$ near
 \begin{equation}
   \label{lam0}
   \lambda_0 :=\mu_1(s_0)=\mu_2(s_0).
 \end{equation}

However, Friedrich and Wingten did not show the existence of BICs for
Eq.~(\ref{eq:s}). Instead, they introduced the following approximate model:
\begin{equation}
  \label{eq:approx}
  \begin{cases}
    &-y_0^{\prime\prime} + d_0y_0 - \lambda y_0 = 
     - \delta \alpha\phi_1 C_1  -  \delta    \beta\phi_2 C_2, \quad x
     > 0 \\
&(\lambda -\mu_1) a_1 C_1   = \delta  a_0    C_2 +  \delta b_1,  \\
&  (\lambda - \mu_2) a_2 C_2 = \delta  a_0   C_1  +  \delta b_2, 
\end{cases}
\end{equation}
where
\[
  a_0  = \langle \phi_1, \gamma \phi_2 \rangle,  \
  b_1 = \langle \phi_1, \alpha y_0 \rangle, \
  b_2 = \langle \phi_2, \beta y_0 \rangle, \
  a_i  = \langle \phi_i, \phi_i \rangle, \  i=1, 2, 
\]
and $\langle \cdot, \cdot\rangle$ denotes the $L^2$ inner product for real
functions on $\mathbb{R}^+$, namely,
\begin{equation}
  \label{L2ip}
  \langle f, g \rangle = \int_0^\infty f(x) g(x) dx.  
\end{equation}
The first equation in (\ref{eq:approx}) corresponds to the original
equation for $y_0$ with $y_1$ and $y_2$ replaced by $C_1 \phi_1$ and $C_2
\phi_2$, respectively. The other two equations in (\ref{eq:approx}) are 
the inner products of $\phi_i$ with the equation of $y_i$, where
$y_1$ and $y_2$ are also replaced by $C_1 \phi_1$ and $C_2 \phi_2$. 
These two equations form a linear system for $C_1$ and $C_2$ which
depend on $y_0$ linearly through the integrals $b_1$ and
$b_2$. 

Together with the boundary conditions $y_0'(0)=0$ and  $y_0(x)
\to 0$ as $x \to +\infty$, Eq.~(\ref{eq:approx}) gives rise to an eigenvalue problem. 
Friedrich and Wingten
showed that a BIC of (\ref{eq:approx}) must
satisfy 
\begin{equation}
  \label{NandD}
{\cal N}(s, \lambda)={\cal D}(s, \lambda) = 0, 
\end{equation}
where ${\cal N}$ and ${\cal D}$ are real functions of
$s$ and $\lambda$,  and they are related to $\phi_1$, $\phi_2$, $F$,
$\delta$, etc.  Friedrich and Wingten argued that if the two
curves $\lambda=\mu_1(s)$ and 
$\lambda=\mu_2(s)$ have a crossing, then (\ref{NandD}) should have a
solution near the crossing point $(s_0, \lambda_0)$. 

Moreover, Friedrich and Wintgen suggested that if the approximate model
(\ref{eq:approx}) has a BIC, then the original system (\ref{eq:s})
should also have a BIC. However, they did not provide any justification. 
Although (\ref{eq:approx}) is derived from (\ref{eq:s}) assuming
$\delta$ is small, there are important differences. As we know from
section 2, a BIC of (\ref{eq:s}) must have $y_0(x) \equiv 0$ for $x >
L$, but for a BIC of the approximate model (\ref{eq:approx}), $y_0$ is
in general nonzero for all $x$, since the first equation in (\ref{eq:approx}) has a nonzero right
hand side. Although Friedrich and Wintgen's work~\cite{fried85} is
very important and widely cited, to the best of our knowledge, the
existence of BICs  for system (\ref{eq:s}) has never been proved.  The
purpose of this paper is to present a rigorous proof.

\section{Numerical examples}
\label{numex}

In this section, we present a numerical example to illustrate the
existence of FW-BICs. We calculate both the exact BIC of the original
system (\ref{eq:s}) and the BIC of the approximate model
(\ref{eq:approx}). The results confirm that they are different. 
The example involves a potential $V$ satisfying
Eqs.~(\ref{DplusF})-(\ref{DFinf}), where $L=4$. The matrices $D$
and $F$ are piecewise constant in $x$ and given in Table~\ref{ex:para}
\begin{table}[htb]
  \centering
  \caption{ 
    Potential $V=D+\delta F$ of a Schr\"{o}dinger system. Elements of
    matrices $D(x; s)$ and $F(x)$ are given in different intervals of  
    $x$.}
  \begin{tabular}[h]{|c||c|c|c|}\hline
    $x$ & $(0,2)$  & $(2,4)$ & $ (4, \infty)$  \\ \hline
    $d_0(x;s)$ & 0 & 0 & 0 \\ \hline
    $d_1(x;s)$ & $-0.2+0.2s$ & $-0.2+0.2s$ & 1 \\ \hline
    $d_2(x;s)$ & $0.4 - 0.5s$ & $-0.6+0.1s$ & 1 \\ \hline
    $\alpha(x)$ & $0.265$ & $-0.805$ & 0 \\ \hline
    $\beta(x)$  & $-0.746$ & $0.827$ & 0 \\ \hline
    $\gamma(x)$ & $0.629$ & $0.812$ & 0 \\ \hline
  \end{tabular}
\vspace{0.2cm}
\label{ex:para}
\end{table}
below. In particular, we let $D$ depend on parameter $s$ linearly.

For this example, we first calculate the two bound states of the
uncoupled system with $V=D$ (i.e. $\delta = 0$). As in the previous section,
the bound state for $y_i$ is denoted as
$\phi_i$ and the corresponding eigenvalue is $\mu_i$, and they 
depend on parameter $s$. In Fig.~\ref{figex}(a),
\begin{figure}[htb]
  \centering
 \includegraphics[width = 6cm]{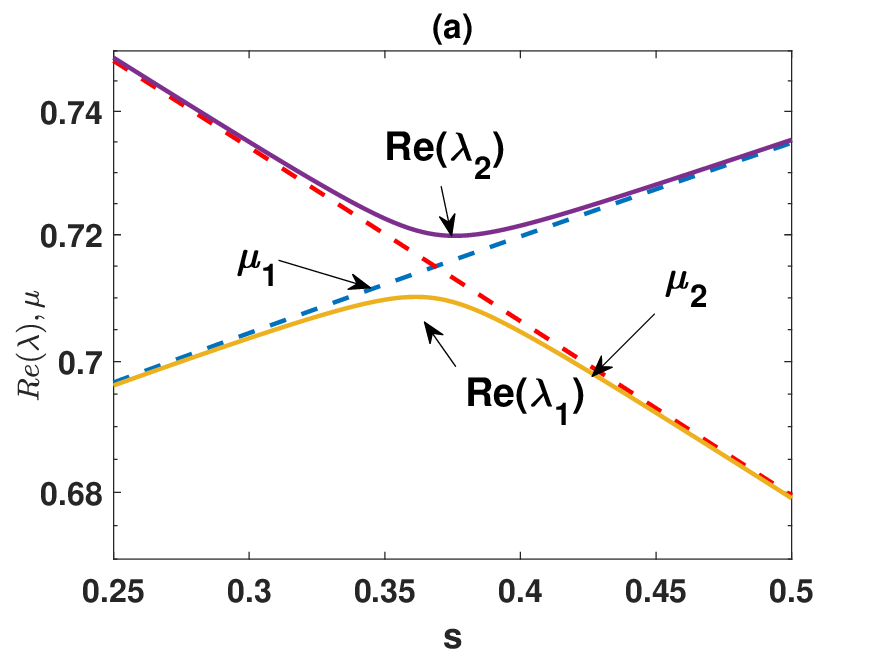}
 \includegraphics[width = 6cm]{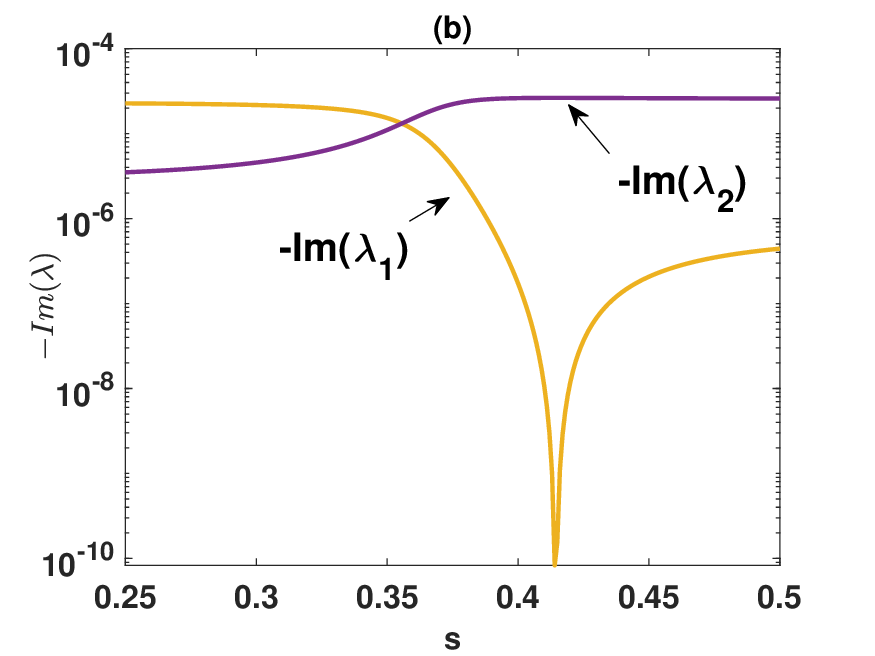}
 \caption{(a)  Eigenvalues $\mu_1$ and $\mu_2$ of two bound states of
   the uncoupled system ($\delta=0$) as functions of parameter $s$, shown as the
   dashed curves. They have a crossing at $s_0 \approx 0.3686$.
   Real parts of $\lambda_1$ and $\lambda_2$ of  
   two resonant states of the coupled system ($\delta = 0.01$) shown 
   as solid curves. (b) 
   Imaginary parts of $\lambda_1$ and $\lambda_2$ of 
   two resonant states of the coupled system ($\delta = 0.01$),
   shown in a logarithmic scale.  A  BIC is found at $s_* \approx
   0.4144$ where $\mbox{Im}(\lambda_1)    = 0$. }
\label{figex}
\end{figure}
we show $\lambda = \mu_1(s)$ and $\lambda = \mu_2(s)$ as dashed curves. Clearly, these two curves have a crossing, and the crossing
point  is $(s_0, \lambda_0) \approx (0.3686,0.7150)$. 
Next, we consider the coupled system (\ref{eq:s}) with
$\delta = 0.01$ and calculate resonant states satisfying the boundary conditions
(\ref{bcat0}) and (\ref{eq:bds}). Since the perturbation introduces coupling between the
different components of ${\bm y}$, the two bound states $\phi_1$ and $\phi_2$ of
the uncoupled system become resonant states with complex frequency 
$\lambda_1$ and $\lambda_2$, respectively. In Fig.~\ref{figex}(a), we show the real
parts of $\lambda_1$ and $\lambda_2$ as functions of $s$. Notice that
unlike $\mu_1$ and $\mu_2$, the real parts of $\lambda_1$ and
$\lambda_2$ do not have a crossing. Since the resonant states lose
power through the outgoing wave of $y_0$, the imaginary parts of
$\lambda_1$ and $\lambda_2$ are negative (in general), and they are
shown in Fig.~\ref{figex}(b) in a logarithmic scale. It can be seen
that $\mbox{Im}(\lambda_1)$ reaches zero at $s_* \approx
0.4144$. This corresponds to a FW-BIC of the coupled system
(\ref{eq:s}). The frequency of the BIC is $\lambda_* \approx
0.7012$. Notice that $(s_*, \lambda_*)$ is reasonably close to $(s_0,
\lambda_0)$. 

For the same $\delta=0.01$, the approximate model (\ref{eq:approx})
has a BIC at $s \approx 0.4136$ and the frequency of that BIC is $\lambda
\approx 0.7014$.  Therefore, the BIC of the original system (\ref{eq:s}) is different
from the BIC of the approximate model. It is not clear whether 
the existence of a BIC in the approximate model really implies the
existence of a BIC in original system. In any case, we will show that 
under proper conditions, Eq.~(\ref{eq:s}) indeed has a BIC. 

\section{Uncoupled and inhomogeneous equations}
In order to show BICs  exist in the coupled system (\ref{eq:s}), 
we first  study the uncoupled
system ($V=D$, i.e., $\delta =0$) with inhomogeneous terms. The
results obtained in this section will be used in section~6 to construct
a BIC for the coupled system. 

\subsection{Basic equations}

The uncoupled  and inhomogeneous system at the crossing point is 
\begin{equation}
\label{eq:D0}
\begin{cases}
  &- {\bm y}^{\prime\prime}  +D(x; s_0) {\bm y} - \lambda_0 {\bm y} =
  {\bm g}(x), \qquad x>0,\\
  &{\bm y}^\prime(0) = {\bf 0}, \\
  & {\bm y}(x) \to {\bm 0}, \quad x \to +\infty, 
\end{cases}
\end{equation}
where ${\bm g}(x) =\left[ g_0(x), g_1(x), g_2(x) \right]^{\sf T}$ is a
real vector function in $L^2(\mathbb{R}^+)$ and 
$g_0(x) \equiv 0$ for $x> L$. 
As in section 3, we assume the homogeneous uncoupled system has  two
bound states $\{ \phi_i(x; s), \mu_i(s) \}$ satisfying (\ref{eq:unp})
for $i=1$, 2, and they have a crossing point $(s_0, \lambda_0)$
satisfying $\lambda_0 = \mu_1(s_0) = \mu_2(s_0) \in (0,1)$.
Equation~(\ref{eq:D0}) is defined for the fixed $s=s_0$ and 
$\lambda=\lambda_0$. 
The results obtained in this section will be used 
in a constructive proof for BICs in the coupled system.

The two bound states above are solutions of the uncoupled equations
for $y_1$ and $y_2$, respectively. For $\lambda > 0$, the uncoupled
equation for $y_0$ does not have a bound state. Instead, it has a real scattering solution satisfying 
\begin{equation}
  \label{defphi0}
\begin{cases}
&-{\phi_0^{\prime\prime}}+d_0(x;s ){\phi_0} = \lambda{\phi_0}, \quad x>0,\\
&\phi_0^\prime(0; s, \lambda) = 0,  \\
& \phi_0(x; s, \lambda) =  \cos(\sqrt{\lambda}x+\theta_0/2), \quad x >
L, 
\end{cases}
\end{equation}
where, as discussed in section 2, $\theta_0$ is the phase of the
reflection coefficient $R_0$. For simplicity, we denote $\phi_0(x; s_0, \lambda_0)$,
$\phi_1(x; s_0)$, $\phi_2(x; s_0)$ by $\phi_0(x)$, $\phi_1(x)$ and
$\phi_2(x)$, respectively, and denote $d_i(x; s_0)$ by $d_i(x)$ for
$i=0$, 1, 2.

In the following, for real functions on $\mathbb{R}^+$, we use
$\| \cdot \|$ and $\| \cdot \|_{H^2}$ to
denote their $L^2$-norm and $H^2$-norm,
respectively, and they are defined as follows:
\begin{equation}
  \label{defnorm}
  \| f \|      = \langle f, f \rangle^{1/2}, \quad 
  \| f \|_{H^2} = ( \|f \|^2 + \| f' \|^2 + \|f'' \|^2)^{1/2}. 
\end{equation}
In addition, we normalize the two bound states so that 
$\| \phi_1 \| = \| \phi_2 \| = 1$.

\subsection{BVP for $y_0$}
Since the equations in (\ref{eq:D0}) are uncoupled, we first consider
the BVP for $y_0$, i.e., 
\begin{equation}
\label{BVP:0}
  \begin{cases}
    & -y_0'' + d_0(x) y_0 - \lambda_0 y_0 = g_0(x), \quad x > 0, \\
    & y_0'(0) = 0, \\
    & y_0(x) \to 0, \quad x \to +\infty.
  \end{cases}
\end{equation}
We have the following lemma.
\begin{lemma}
If $g_0(x)$ is a real function in $L^2(\mathbb{R}^+)$, $g_0(x) \equiv
0$ for $x>L$,
$d_0(x)$ is a real piecewise smooth function,  $d_0(x) \equiv 0$ for $x  
> L$, $\lambda_0 > 0$, and $\phi_0$ is the real scattering solution
satisfying {\rm (\ref{defphi0})} with $s=s_0$ and $\lambda=\lambda_0$,
then  {\rm BVP (\ref{BVP:0})} has a solution if and only if $\langle
\phi_0, g_0 \rangle=0$. Moreover, the solution $y_0$ is unique, and there exists a constant $C$ independent of $g_0$, such that 
$\|y_0\|_{H^2} \le C \|g_0\|.$
\end{lemma}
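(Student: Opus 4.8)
The plan is to exploit the fact that $\lambda_0 > 0$ lies in the continuous spectrum, which makes the decay condition $y_0(x) \to 0$ far more rigid than in the usual negative-energy case. For $x > L$ the equation reduces to $-y_0'' - \lambda_0 y_0 = 0$, whose general solution $a\cos(\sqrt{\lambda_0}x) + b\sin(\sqrt{\lambda_0}x)$ is bounded but never decays unless $a=b=0$. Hence any admissible $y_0$ must vanish identically for $x > L$, and by continuity $y_0(L) = y_0'(L) = 0$. I would therefore replace the half-line problem by the equivalent two-point problem on $(0,L)$: solve $-y_0'' + (d_0 - \lambda_0) y_0 = g_0$ with the Neumann condition $y_0'(0) = 0$ at the left endpoint and the full Cauchy data $y_0(L) = y_0'(L) = 0$ at the right endpoint, then extend by zero to $x > L$.

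Necessity of $\langle \phi_0, g_0 \rangle = 0$ follows from Green's identity. Writing $\mathcal{L}y = -y'' + (d_0 - \lambda_0)y$ and pairing the equation with $\phi_0$, integration by parts twice gives $\langle \phi_0, g_0 \rangle = \langle \phi_0, \mathcal{L}y_0 \rangle = \langle \mathcal{L}\phi_0, y_0 \rangle + [\phi_0' y_0 - \phi_0 y_0']_0^\infty$. The bulk term vanishes because $\mathcal{L}\phi_0 = 0$; the boundary term at infinity vanishes because $y_0 \equiv 0$ there; and the boundary term at $x=0$ vanishes because $\phi_0'(0) = 0$ and $y_0'(0) = 0$. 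Note that $\phi_0 \notin L^2(\mathbb{R}^+)$, but $g_0$ is compactly supported in $[0,L]$, so the pairing $\langle \phi_0, g_0 \rangle = \int_0^L \phi_0 g_0\, dx$ is well defined.

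For sufficiency I would construct the solution explicitly. Let $\psi$ be the second, linearly independent homogeneous solution fixed by its asymptotics $\psi(x) = \sin(\sqrt{\lambda_0}x + \theta_0/2)$ for $x > L$; a short computation gives the constant Wronskian $W = \phi_0 \psi' - \phi_0' \psi = \sqrt{\lambda_0} \ne 0$. Taking the variation-of-parameters solution based at the right endpoint,
\[
y_0(x) = \frac{1}{W}\Big[ \psi(x)\int_x^L \phi_0 g_0\, dt - \phi_0(x)\int_x^L \psi g_0\, dt \Big], \qquad 0 \le x \le L,
\]
and $y_0 \equiv 0$ for $x > L$, one checks directly that $y_0(L) = y_0'(L) = 0$ automatically, so the right boundary condition is free. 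The only remaining requirement is the Neumann condition at $0$; differentiating and using $\phi_0'(0) = 0$ reduces it to $y_0'(0) = (\psi'(0)/W)\langle \phi_0, g_0 \rangle$. Since $\psi'(0) \ne 0$ (otherwise $\psi$ would satisfy the Neumann condition and hence be proportional to $\phi_0$), this vanishes exactly when $\langle \phi_0, g_0 \rangle = 0$. The resulting $y_0$ is $C^1$ across $x=L$ with compactly supported, so it lies in $H^2(\mathbb{R}^+)$.

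Uniqueness is immediate: the difference of two solutions solves the homogeneous BVP, which by the rigidity argument above must vanish for $x > L$ and then, by uniqueness for the ODE initial value problem run backward from $x=L$, vanish on all of $(0,L)$. The bound $\|y_0\|_{H^2} \le C\|g_0\|$ follows by reading estimates off the explicit formula: $\phi_0$ and $\psi$ (and their derivatives) are continuous hence bounded on the compact interval $[0,L]$, so Cauchy--Schwarz gives $\|y_0\| \le C\|g_0\|$ and likewise $\|y_0'\| \le C\|g_0\|$, while $\|y_0''\| \le C\|g_0\|$ comes from $y_0'' = (d_0 - \lambda_0)y_0 - g_0$ together with the boundedness of $d_0$ on $[0,L]$. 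I expect the main obstacle to be conceptual rather than computational: recognizing that the continuous-spectrum boundary condition collapses to compactly supported data with vanishing Cauchy values at $L$, and that the single solvability condition is orthogonality to the non-$L^2$ Neumann solution $\phi_0$. Once the base point of the variation-of-parameters formula is placed at $x=L$, the right-hand conditions become automatic and the entire problem reduces to one scalar condition at $x=0$.
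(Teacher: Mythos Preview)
Your proof is correct and follows essentially the same route as the paper: both observe that $y_0$ must vanish identically for $x>L$, obtain the orthogonality condition by pairing with $\phi_0$, prove sufficiency via variation of parameters with a fundamental system of the homogeneous equation, and argue uniqueness by the backward initial-value problem from $x=L$. The only minor differences are that you base the variation-of-parameters formula at $x=L$ (so the right-end conditions come for free and only $y_0'(0)=0$ needs checking) whereas the paper bases at $x=0$ (so the Neumann condition is automatic and vanishing at $L$ must be verified), and you extract the $H^2$ bound directly from the explicit formula while the paper invokes an elliptic regularity estimate from Evans.
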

\begin{proof}  For $x > L$, the equation for $y_0$ is simply
$y_0''+ \lambda_0 y_0 = 0$.  Since $\lambda_0 > 0$, if 
BVP (\ref{BVP:0}) has a solution $y_0$ that tends to zero as $x \to
+\infty$, then $y_0(x) \equiv 0$ for
$x > L$.
Multiplying $\phi_0$ to both side of the
inhomogeneous Schr\"{o}dinger equation of $y_0$ and integrating on $(0, \infty)$, we immediately obtain
$\langle \phi_0, g_0 \rangle=0$. Therefore, $\langle \phi_0, g_0 \rangle=0$
is a necessary condition for (\ref{BVP:0}) to have a solution.
The solution is unique, because  if $w$ is the difference of 
two solutions, then it satisfies 
the corresponding homogeneous equation with (initial) 
conditions $w(L)=w'(L)=0$, thus $w(x) \equiv 0$ for all $x > 0$.

Let $w_1$ and $w_2$ be the solutions of the homogeneous
equation with  the following  initial conditions:  
\[
w_1^\prime(0) = w_2(0) = 0, \quad w_1(0) = w_2^\prime(0) = 1.  
\]
It is clear that the scattering solution $\phi_0$ can be written as
$\phi_0(x) = \phi_0(0) w_1(x)$ with $\phi_0(0) \ne 0$. 
The general solution satisfying the inhomogeneous
Schr\"{o}dinger equation and boundary condition $y_0'(0) = 0$ is 
\begin{equation}
\label{eq:gsol}
y_0(x)  = \int_0^x\left[ w_1(t)w_2(x)-w_2(t)w_1(x) \right] g_0(t)\,dt + C_0 w_1(x),
\end{equation}
where $C_0$ is an arbitrary constant.
It is easy to verify that if $\langle \phi_0, g_0 \rangle=0$
and $C_0 = \int_0^L w_2(t) g_0(t) dt$, then $y_0$ given in (\ref{eq:gsol})
satisfies $y_0(x)=y_0'(x)=0$ for all $x \ge L$, and is a  solution
of BVP (\ref{BVP:0}). Therefore, $\langle \phi_0, g_0 \rangle=0$ is
also a sufficient condition.

If $y_0$ is a solution of (\ref{BVP:0}), it is also a solution of a reduced BVP on $(0,L)$
with boundary conditions $y_0'(0)=0$ and $y_0(L)=0$ [or $y_0'(L)=0$].  Since
$y_0(x) \equiv 0$ for $x>L$,  we only need to consider the boundedness
for solutions of  the reduced BVP. Since $d_0$ is piecewise smooth,
according to Ref.~\cite{evans22} (theorem 4 in section 6.3), there
is a constant $C$ independent of  $g_0$, such that
\begin{equation}
\|y_0\|_{H^2}=\|y_0\|_{H^2((0,L))}\le C\|g_0\|_{L^2((0,L))}= C\|g_0\|. 
\end{equation}

\end{proof}

\subsection{BVP for $y_1$}
In the uncoupled system (\ref{eq:D0}),  $y_1$ satisfies the following BVP:
\begin{equation}
\label{BVP:1}
\begin{cases}
& -y_1'' + d_1(x) y_1 - \lambda_0 y_1 = g_1(x), \quad  x>0, \\
& y_1^\prime(0) = 0, \\
& y_1(x) \to  0, \quad x\to+\infty. 
\end{cases}  
\end{equation}
Concerning the solvability and boundedness  
of $y_1$, we have the following lemma. 
\begin{lemma}
  If   $g_1(x)$ is a real function in $L^2(\mathbb{R}^+)$, $d_1(x)$ is a
  piecewise smooth function,  $ d_1(x) \equiv 1$ for $x > L$,
  $\lambda_0 \in (0,1)$,
and $\phi_1$ is a bound state satisfying {\rm (\ref{eq:unp})}, $s=s_0$ and $\mu_1(s_0)=\lambda_0$,   then {\rm BVP (\ref{BVP:1})} is solvable
if and only if  $ \langle \phi_1, g_1 \rangle = 0$, and
  there is a constant $C$ independent of $g_1$, such that 
  $\| y_1^{(p)} \|_{H^2} \le C \| g_1 \|$,
  where $y_1^{(p)}$ is the unique solution of {\rm BVP (\ref{BVP:1})}
  satisfying $\langle \phi_1, y_1^{(p)} \rangle=0$.
\end{lemma}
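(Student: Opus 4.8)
The plan is to mirror the constructive strategy of the preceding lemma, but now accounting for two new features absent from the $y_0$ case: the homogeneous problem has a nontrivial kernel (spanned by the bound state $\phi_1$), so the solution cannot be unique, and the forcing $g_1$ need not vanish for $x>L$, so the analysis cannot be reduced to the bounded interval $(0,L)$ and must control an $L^2$ tail at infinity. Throughout I set $\kappa=\sqrt{1-\lambda_0}>0$, so that for $x>L$ the homogeneous equation $-w''+(1-\lambda_0)w=0$ has a decaying solution $e^{-\kappa x}$ and a growing solution $e^{\kappa x}$, with $\phi_1(x)=K_1 e^{-\kappa x}$ there.

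Necessity of $\langle\phi_1,g_1\rangle=0$ is the easy direction: I multiply the equation in (\ref{BVP:1}) by $\phi_1$, integrate over $(0,\infty)$, and integrate by parts twice. The boundary contributions at $x=0$ vanish because $\phi_1'(0)=y_1'(0)=0$, and those at $x=+\infty$ vanish because $\phi_1$, $y_1$ and their derivatives decay exponentially; since $\phi_1$ solves the homogeneous equation with eigenvalue $\lambda_0$, the whole left-hand side collapses to $\langle\phi_1,g_1\rangle$, which must therefore be $0$.

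For sufficiency I would build the solution explicitly by variation of parameters. Let $\eta$ be the homogeneous solution fixed by $\eta(0)=0$, $\eta'(0)=1$; it is independent of $\phi_1$ (the Wronskian $W=\phi_1(0)\eta'(0)$ is a nonzero constant), and, being independent of the only decaying solution, it grows like $e^{\kappa x}$. Writing the general solution as
\[
y_1(x)=\phi_1(x)\Big[c_1+\tfrac1W\textstyle\int_0^x \eta\,g_1\,dt\Big]+\eta(x)\Big[c_2-\tfrac1W\textstyle\int_0^x \phi_1\,g_1\,dt\Big],
\]
the decay requirement at $+\infty$ forces the coefficient of the growing mode $\eta$ to tend to zero, which fixes $c_2=\tfrac1W\int_0^\infty\phi_1 g_1\,dt=\langle\phi_1,g_1\rangle/W$. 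The key computation is that the Neumann condition gives $y_1'(0)=\eta'(0)\,c_2=c_2$, so $y_1'(0)=0$ forces $c_2=0$; that is, it is precisely $\langle\phi_1,g_1\rangle=0$ that makes the decay condition at infinity compatible with the boundary condition at the origin. The remaining free constant $c_1$ reflects the expected non-uniqueness (one may add any multiple of $\phi_1$) and is pinned down by imposing $\langle\phi_1,y_1^{(p)}\rangle=0$, which gives $|c_1|\le C\|g_1\|$ since $\|\phi_1\|=1$.

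The main obstacle is the a priori bound, because the surviving kernel now involves the growing solution $\eta$ together with the non-compactly-supported $g_1$. I would verify both the decay and the $L^2$ estimate separately on $(0,L)$ and on $(L,\infty)$. On $(0,L)$ the coefficients and $\phi_1,\eta$ are bounded, so the integral operator is bounded there. On $(L,\infty)$, using $c_2=0$ the two surviving pieces are $\phi_1(x)\,\tfrac1W\int_0^x\eta\,g_1\,dt$ and $\eta(x)\,\tfrac1W\int_x^\infty\phi_1\,g_1\,dt$, whose kernels reduce to explicit exponentials of the form $e^{-\kappa|x-t|}$; a Young (or Schur-test) estimate then yields $\|y_1^{(p)}\|\le C\|g_1\|$ with $C$ independent of $g_1$, and the same tail analysis shows $y_1^{(p)}(x)\to0$. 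The $H^2$ bound follows cheaply: the equation gives $\|y_1''\|\le(\sup_x|d_1(x)-\lambda_0|)\,\|y_1\|+\|g_1\|\le C\|g_1\|$ since $d_1$ is bounded, and then $\|y_1'\|^2=-\langle y_1,y_1''\rangle\le\|y_1\|\,\|y_1''\|$ (the boundary term vanishing by $y_1'(0)=0$ and decay), so that $\|y_1^{(p)}\|_{H^2}\le C\|g_1\|$. As an alternative I note that the entire statement is the Fredholm alternative for the self-adjoint Neumann operator $-d^2/dx^2+d_1$ on $L^2(\mathbb{R}^+)$, for which $\lambda_0$ is an isolated eigenvalue below the essential spectrum $[1,\infty)$, and the constant $C$ is then simply the reciprocal of the spectral gap separating $\lambda_0$ from the rest of the spectrum.
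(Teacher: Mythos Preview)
Your proof is correct, but it takes a genuinely different route from the paper's.  The paper argues abstractly: it observes that $\mathcal{L}_1=-d^2/dx^2+d_1-\lambda_0:H^2(\mathbb{R}^+)\to L^2(\mathbb{R}^+)$ is a compact perturbation of the constant-coefficient operator $\mathcal{L}_1^\infty=-d^2/dx^2+1-\lambda_0$, which is bijective by Fourier analysis since $\lambda_0<1$; hence $\mathcal{L}_1$ is Fredholm, the solvability condition is orthogonality to the one-dimensional kernel $\mathrm{span}\{\phi_1\}$, and the $H^2$ bound comes from the closed-range estimate $\|y_1-\mathcal{P}y_1\|_{H^2}\le C\|\mathcal{L}_1 y_1\|$ (citing Davies and Volpert), followed by a short adjustment from the $H^2$-orthogonal projection to the $L^2$-orthogonal one.

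Your argument is instead fully constructive: you build the solution by variation of parameters from the pair $(\phi_1,\eta)$, read off the compatibility $\langle\phi_1,g_1\rangle=0$ as precisely the condition that makes the Neumann constraint $c_2=0$ consistent with the decay constraint $c_2=\langle\phi_1,g_1\rangle/W$, and then obtain the $L^2$ bound by a Schur/Young estimate on the exponential kernel $e^{-\kappa|x-t|}$ for the tail $x>L$, upgrading to $H^2$ via the equation itself.  This is more hands-on and self-contained (no external Fredholm or closed-range references needed), and it makes the mechanism of the solvability condition very transparent; the paper's approach is shorter and more in the spirit of the functional-analytic framework used elsewhere in Section~5.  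Your closing remark about the Fredholm alternative for the self-adjoint Neumann operator, with $\lambda_0$ an isolated eigenvalue below the essential spectrum $[1,\infty)$, is in fact essentially the paper's viewpoint compressed into one sentence.
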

\begin{proof}
  Since $d_1(x) \equiv 1$ for $x > L$, ${\cal L}_1 := -d^2/dx^2  + d_1(x) -
\lambda_0$, as a linear operator from
$H^2(\mathbb{R}^+)$ to $L^2(\mathbb{R}^+)$, is a compact perturbation of
${\cal L}_1^{\infty} := - d^2/ dx^2  +1 -\lambda_0$.
Therefore,  according to Ref.~\cite{davis07} (lemma
  14.4.1), ${\cal L}_1$ is Fredholm if and only if
${\cal L}_1^\infty$ is Fredholm. 
Since $\lambda_0 < 1$, by Fourier transform,
${\cal L}_1^\infty$ is bijective and thus  Fredholm.
Therefore, ${\cal L}_1$ is also Fredholm. Consequently, 
BVP (\ref{BVP:1}) is solvable if and only if $g_1$ satisfies
$\langle g_1,\phi \rangle  = 0$ for any $\phi \in H^2(\mathbb{R}^+)$
satisfying the homogeneous equation. Since the homogeneous equation has
only one linearly independent bounded solution,  the condition is equivalent to
$\langle g_1, \phi_1 \rangle =0.$

If $y_1$ is a solution of BVP (\ref{BVP:1}), then there is a constant
$C_1$ such that $w = y_1 - C_1 \phi_1$ is also a solution of 
(\ref{BVP:1}), and is orthogonal to $\phi_1$ by the $H^2$ inner
product. In fact, let ${\cal  P}$ be a projection operator of
$H^2(\mathbb{R}^+)$ on the kernel of ${\cal L}_1$, then $C_1 \phi_1 = {\cal P} y_1$. Since ${\cal L}_1 :
H^2(\mathbb{R}^+) \to L^2(\mathbb{R}^+)$ has a closed image and a
one-dimensional kernel,  according to 
Ref.~\cite{volp11} (lemma 3.5 in chapter 6), 
there is a constant $C_2 $ independent of $y_1$, such that
$\| w \|_{H^2}  = \|y_1  - {\cal P} y_1\|_{H^2} \le C_2  \| {\cal L}_1
y_1\| = C_2 \| g_1 \|$.
We are concerned with the particular solution $y_1^{(p)}$ orthogonal
to $\phi_1$ by the $L^2$ inner product. In fact, since 
$\| \phi_1 \| = 1$, we have $y_1^{(p)} = w + C_3 \phi_1$, where $C_3 = - \langle \phi_1,
w\rangle$. Thus, $|C_3 | \le \| \phi_1 \|  \, \| w\|  = \| w\|
\le \| w \|_{H^2}$, and therefore
\[
  \| y_1^{(p)} \|_{H^2} \le
\|w\|_{H^2} + |C_3| \, \| \phi_1\|_{H^2} 
\le (1 + \| \phi_1\|_{H^2})  \|w\|_{H^2} \le C \| g_1 \|,
\]
where $C =  (1 + \| \phi_1\|_{H^2}) C_2$. 
\end{proof}

\subsection{BVP (\ref{eq:D0})}

Since the equations in (\ref{eq:D0}) are uncoupled, and the BVP for
$y_2$ is exactly the same as that for $y_1$, we can derive
solvability conditions and establish the boundedness of the solutions
for BVP (\ref{eq:D0}) by simply putting together the results for each
components of ${\bm y}$. Corresponding to the scattering solution
$\phi_0$ and bound states $\phi_1$ and $\phi_2$, we define the
following vector functions:
\begin{equation}
\label{def:sol}
{\bm \phi}_0  = \begin{bmatrix}
\phi_0\\
0\\
0 
\end{bmatrix}, \quad
{\bm \phi}_1  = \begin{bmatrix}
0\\
\phi_1\\
0 
\end{bmatrix}, \quad
{\bm \phi}_2 = \begin{bmatrix}
0\\
0\\
\phi_2 
\end{bmatrix}. 
\end{equation}
For real column vector functions on $\mathbb{R}^+$, we define the
$L^2$ inner product as
\begin{equation}
\langle {\bm f}, {\bm g} \rangle = \int_0^{\infty} [ {\bm f}(x)]^{\sf
  T} {\bm g}(x) \,dx
\end{equation}
and define the $L^2$-norm and $H^2$-norm as in (\ref{defnorm}).  The
following result is a direct consequence of Lemmas 5.2 and 5.3 
\begin{lemma}
  \label{lem:uncp}
  If ${\bm g}(x) = [ g_0(x), g_1(x), g_2(x) ]^{\sf T}$ is a real vector function in $L^2(\mathbb{R}^+)$,
  $g_0(x) \equiv 0$ for $x>L$, $D(x; s_0) = {\rm diag} \{d_0(x),
  d_1(x), d_2(x) \}$ is a real piecewise smooth
  diagonal matrix function, $D(x; s_0) \equiv  {\rm diag}\{ 0, 1, 1\}$ for $x> L$, $\lambda_0 \in (0,1)$, $\phi_0$ is a real scattering
  solution satisfying {\rm (\ref{defphi0})} for $s=s_0$ and
  $\lambda=\lambda_0$, $\phi_1$ and $\phi_2$ are bound states
  satisfying {\rm (\ref{eq:unp})} for $s=s_0$ and $\mu_1(s_0)=\mu_2(s_0)=\lambda_0$, 
 ${\bm \phi}_0$, ${\bm \phi}_1$, ${\bm \phi}_2$ are vector
  functions given
  in {\rm (\ref{def:sol})}, then {\rm BVP (\ref{eq:D0})} has a solution if and
  only if 
  \begin{equation}
    \label{eq:solcond}
    \langle {\bm \phi}_i, {\bm g} \rangle  = 0,\quad i=0, 1, 2, 
  \end{equation}
  and there is a constant $C$ independent of ${\bm g}$, such that 
  \begin{equation}
\label{eq:bdd}
\| {\bm y}^{(p)} \|_{H^2} \le  C \|{\bm g}\|,
\end{equation}
where ${\bm y}^{(p)}$ is a particular solution satisfying $\langle {\bm
  \phi}_i, {\bm y}^{(p)} \rangle  = 0$ for $i=1, 2$. 
\end{lemma}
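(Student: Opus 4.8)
The plan is to exploit the fact that, because $D(x;s_0)$ is diagonal, the vector BVP (\ref{eq:D0}) decouples into three independent scalar BVPs — one for each component $y_0$, $y_1$, $y_2$ — and then to reassemble the conclusions of Lemmas 5.2 and 5.3. First I would observe that the block structure of the test functions in (\ref{def:sol}) collapses each pairing: since $\bm\phi_0 = [\phi_0,0,0]^{\sf T}$, $\bm\phi_1 = [0,\phi_1,0]^{\sf T}$, $\bm\phi_2 = [0,0,\phi_2]^{\sf T}$, we have $\langle \bm\phi_0, \bm g\rangle = \langle \phi_0, g_0\rangle$, $\langle \bm\phi_1, \bm g\rangle = \langle \phi_1, g_1\rangle$, and $\langle \bm\phi_2, \bm g\rangle = \langle \phi_2, g_2\rangle$. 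Hence the vector solvability condition (\ref{eq:solcond}) is precisely the conjunction of the three scalar conditions $\langle \phi_i, g_i\rangle = 0$.

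For the $y_0$ equation the hypotheses of Lemma 5.2 hold verbatim ($d_0 \equiv 0$ for $x > L$, $\lambda_0 > 0$, $g_0 \equiv 0$ for $x > L$), so that BVP is solvable iff $\langle \phi_0, g_0\rangle = 0$ and its solution $y_0$ is unique with $\|y_0\|_{H^2} \le C_0\|g_0\|$. For the $y_1$ equation Lemma 5.3 applies because $d_1 \equiv 1$ for $x > L$ (from (\ref{DFinf})) and $\mu_1(s_0) = \lambda_0 \in (0,1)$ at the crossing; it yields solvability iff $\langle \phi_1, g_1\rangle = 0$ together with an $L^2$-orthogonal particular solution $y_1^{(p)}$ obeying $\langle \phi_1, y_1^{(p)}\rangle = 0$ and $\|y_1^{(p)}\|_{H^2} \le C_1\|g_1\|$. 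The $y_2$ equation is governed by $d_2$, which likewise satisfies $d_2 \equiv 1$ for $x > L$ and $\mu_2(s_0) = \lambda_0$ at the crossing, so a verbatim copy of Lemma 5.3 with the index $2$ gives $y_2^{(p)}$ with $\langle \phi_2, y_2^{(p)}\rangle = 0$ and $\|y_2^{(p)}\|_{H^2} \le C_2\|g_2\|$. Setting $\bm y^{(p)} = [y_0, y_1^{(p)}, y_2^{(p)}]^{\sf T}$ produces a particular solution, and its orthogonality $\langle \bm\phi_1, \bm y^{(p)}\rangle = \langle \phi_1, y_1^{(p)}\rangle = 0$ and $\langle \bm\phi_2, \bm y^{(p)}\rangle = \langle \phi_2, y_2^{(p)}\rangle = 0$ is immediate from the block structure. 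Note that orthogonality is claimed only against $\bm\phi_1$ and $\bm\phi_2$, which span exactly the two-dimensional kernel of the BVP operator ($\bm\phi_0$ does not decay and so is not a bound state), so these two conditions pin down $\bm y^{(p)}$ uniquely.

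The remaining step is the vector bound (\ref{eq:bdd}). Here I would unwind the definition of the vector $H^2$-norm: since $\langle \bm f, \bm g\rangle = \sum_{i} \langle f_i, g_i\rangle$, the norm splits as $\|\bm y^{(p)}\|_{H^2}^2 = \|y_0\|_{H^2}^2 + \|y_1^{(p)}\|_{H^2}^2 + \|y_2^{(p)}\|_{H^2}^2$ and likewise $\|\bm g\|^2 = \|g_0\|^2 + \|g_1\|^2 + \|g_2\|^2$. Combining the three component estimates gives $\|\bm y^{(p)}\|_{H^2}^2 \le C_0^2\|g_0\|^2 + C_1^2\|g_1\|^2 + C_2^2\|g_2\|^2 \le C^2\|\bm g\|^2$ with $C = \max\{C_0, C_1, C_2\}$, which is (\ref{eq:bdd}).

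Since the lemma is essentially a bookkeeping assembly of the two preceding results, I do not expect a deep obstacle; the only points requiring care are (i) verifying that the $y_2$ equation genuinely meets the hypotheses of Lemma 5.3 — which reduces to checking $d_2 \equiv 1$ for $x > L$ and $\mu_2(s_0) = \lambda_0$, both guaranteed by (\ref{DFinf}) and the crossing assumption — and (ii) confirming that the vector $H^2$-norm decomposes additively over components so that the three scalar bounds merge into one. Neither is difficult, but both must be stated explicitly to make the reduction airtight.
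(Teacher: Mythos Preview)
Your proposal is correct and follows exactly the approach the paper takes: the paper states outright that this lemma ``is a direct consequence'' of the two preceding scalar lemmas, obtained by decoupling the diagonal system and noting that the BVP for $y_2$ is identical in form to that for $y_1$. Your write-up simply spells out the bookkeeping (component-wise reduction of the orthogonality conditions, additive decomposition of the vector $H^2$-norm, and $C=\max\{C_0,C_1,C_2\}$) that the paper leaves implicit.
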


\section{Friedrich-Wintgen BICs}
In this section, we present a constructive proof for the existence of
a BIC in the coupled system. The BIC and the parameter $s$ are
constructed as power series of $\delta$, and we show that the series
converge if $\delta$ sufficiently small  in absolute value. 

\subsection{Assumptions}

So far, we have assumed that the diagonal matrix $D(x; s)$ depends a 
parameter $s$, and the uncoupled system has two bound states with a
crossing at $s_0$. To prove the existence of FW-BICs in the coupled
system, we need an additional assumption on the perturbation matrix
$F$. For simplicity, we further assume that $D$ depends on $s$ linearly.
The assumptions are summarized as follows.
\begin{enumerate}
\item [A1:]  The diagonal matrix $D$ depends on $s$ linearly and is
  written as
  \begin{equation}
  \label{Dlinear}
  D(x;s) = D_0(x)+(s-s_0)D_1(x),   \quad D_0(x) := D(x;s_0).
\end{equation}
\item [A2:] The uncoupled system, (\ref{eq:s}) with $\delta=0$ or
  $V(x)=D(x;s)$, has 
  two bound states $\{ \phi_i(x; s),  \mu_i(s) \}$ 
  satisfying
  (\ref{eq:unp}) and $\mu_i(s) \in (0,1)$ for $s \in I_s$ and $i \in
  \{1, 2\}$. These two bound states have a crossing at $s_0 \in I_s$, such
  that 
  \begin{equation}
    \label{nontan}
    \lambda_0 := \mu_1(s_0) = \mu_2(s_0), \quad 
    \langle {\bm \phi}_1, D_1 {\bm \phi}_1 \rangle \ne 
    \langle {\bm \phi}_2, D_1 {\bm \phi}_2 \rangle,
  \end{equation}
  where ${\bm \phi}_1$ and ${\bm \phi}_2$ are given in
  Eq.~(\ref{def:sol}) and normalized 
  such that
  \begin{equation}
    \label{normphi12}
    \| {\bm \phi}_1 \| = \| {\bm \phi}_2 \| = 1.    
  \end{equation}
\item [A3:] The perturbation matrix $F$ satisfies
  \begin{equation}
  \label{coupling}
  \langle {\bm \phi}_0, F {\bm \phi}_j \rangle \ne 0, \quad j=1, 2, 
\end{equation}
where ${\bm \phi}_0$ is given in Eq.~(\ref{def:sol})  and normalized
such that
\begin{equation}
  \label{normphi0}
  | \langle {\bm \phi}_0, F {\bm \phi}_1 \rangle |^2 
  +   | \langle {\bm \phi}_0, F {\bm \phi}_2 \rangle |^2 = 1.  
\end{equation}
\end{enumerate}

Notice that in the $s$-$\lambda$ plane, the two curves
$\lambda = \mu_1(s)$ and $\lambda=\mu_2(s)$ should not be tangential
to each other at $(s_0, \lambda_0)$. This leads to the second part of
(\ref{nontan}). 
When $\delta$ in $V(x) = D(x; s) + \delta F(x)$ is nonzero, the 
equations in (\ref{eq:s}) are supposed to be coupled. Condition
(\ref{coupling}) ensures that the bound states $\phi_i$
($i=1$ and  2) are coupled to the radiation channel (i.e., the scattering solution
$\phi_0$), so that they become resonant states in the coupled
system.  To simplify the notations, we let 
\begin{equation}
  \label{dijfij}
  d_{ij} = \langle {\bm \phi}_i,D_1{\bm \phi}_j \rangle,\quad f_{ij} =  \langle 
  {\bm \phi}_i,F{\bm \phi}_j \rangle, \quad \mbox{for} \quad i, j = 0, 1, 2.
\end{equation}

\subsection{Power series}

For the coupled system (\ref{eq:s}) with a small nonzero $\delta$, we look
for a BIC $\{ {\bm y}, \lambda \}$ which we assume can be expanded in  power
series of $\delta$.  Since the BIC is expected at some $s$ near
$s_0$, we also expand $s$ in a power series of
$\delta$. Therefore, 
\begin{equation}
  \label{eq:series}
  {\bm y} =\sum_{n=0}^{\infty}{\bm y}_n\delta^n, \quad
  \lambda   =\sum_{n=0}^{\infty}\lambda_n \delta^n,\quad
  s=\sum_{n=0}^{\infty}s_n\delta^n. 
\end{equation}
Inserting the above series into Eq.~(\ref{eq:s}) and collecting terms
at different orders of $\delta$, we obtain the following 
equations: 
\begin{equation}
  \label{eq:ordern}
  - {\bm y}_n'' + D_0(x) {\bm y}_n - \lambda_0 {\bm y}_n 
  = {\bm g}_n(x),    \quad x > 0,
\end{equation}
where ${\bm  g}_0(x) = {\bm 0}$ and 
\begin{equation}
  \label{rhsgj}  
  {\bm g}_n(x) =    \sum_{i=1}^n \left[ \lambda_i - s_i D_1(x) \right] {\bm 
    y}_{n-i}  - F(x)  {\bm y}_{n-1}, \quad n \ge 1.   
\end{equation}
The boundary conditions are 
\begin{equation}
  \label{bcyj}
  {\bm y}_n^\prime(0) = {\bf 0}, \quad  {\bm y}_n(x) \to {\bm 0},\
  x\to+\infty, \quad n\ge 0. 
\end{equation}
For (\ref{eq:ordern})-(\ref{bcyj}), we show that a nontrivial ${\bm
  y}_0$ can be determined, and for each $n\ge 1$, 
$\lambda_n$, $s_n$, ${\bm y}_n$ can be solved and they are real. We
have the follow theorem.
\begin{theorem}
  If assumptions {\rm A1}, {\rm A2} and {\rm A3} are valid, then
  \begin{enumerate}
  \item ${\bm y}_0$ is a real nonzero linear combination of ${\bm \phi}_1$ and 
${\bm \phi}_2$;
\item For each $n\ge 1$, $\lambda_n$, $s_n$  and another number
  $\gamma_n$ can be  determined and they are real;
\item  For each $n\ge 1$, Eq.~(\ref{eq:ordern}) and boundary condition
  (\ref{bcyj}) have a real solution 
\begin{equation}
  \label{solyj}
  {\bm y}_n = {\bm y}_n^{(p)} + \gamma_n {\bm y}_0^\perp, 
\end{equation}
where ${\bm y}_n^{(p)}$ is a real particular 
solution orthogonal to ${\bm \phi}_1$ and ${\bm \phi}_2$, ${\bm 
  y}_0^\perp $ is a real linear combination of ${\bm 
  \phi}_1$ and ${\bm \phi}_2$  and it is orthogonal to ${\bm y}_0$.
\end{enumerate}
\end{theorem}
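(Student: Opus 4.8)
The plan is to solve the hierarchy (\ref{eq:ordern})--(\ref{bcyj}) recursively in $n$, invoking at each order the solvability criterion of Lemma~\ref{lem:uncp}: the order-$n$ BVP has a decaying solution if and only if $\langle \bm{\phi}_i, \bm{g}_n \rangle = 0$ for $i=0,1,2$, in which case the particular solution orthogonal to $\bm{\phi}_1,\bm{\phi}_2$ is unique and obeys the $H^2$ bound (which will later drive the convergence argument). The point is that these three scalar conditions are exactly what pins down the three scalar unknowns that are free upon reaching order $n$. I would also fix the gauge $\langle \bm{y}_0, \bm{y}_n\rangle = 0$ for $n\ge 1$ (equivalently $\langle \bm{y}_0,\bm{y}\rangle = 1$), which is why the homogeneous part added to each $\bm{y}_n$ is a single multiple $\gamma_n\bm{y}_0^\perp$ rather than an arbitrary element of the two-dimensional kernel $\mathrm{span}\{\bm{\phi}_1,\bm{\phi}_2\}$.

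For the lowest orders, at $n=0$ the equation is homogeneous with decaying data, so $\bm{y}_0 = c_1\bm{\phi}_1 + c_2\bm{\phi}_2$. The first-order source is $\bm{g}_1 = \lambda_1\bm{y}_0 - s_1 D_1\bm{y}_0 - F\bm{y}_0$. Since $\bm{\phi}_0$ occupies the zeroth component only and $D_1$ is diagonal, $\langle\bm{\phi}_0,\bm{y}_0\rangle$ and $\langle\bm{\phi}_0,D_1\bm{y}_0\rangle$ vanish, so the $i=0$ condition collapses to the destructive-interference relation $c_1 f_{01} + c_2 f_{02} = 0$; by A3 this fixes the direction of $\bm{y}_0$ (after $\|\bm{y}_0\|=1$), with $c_1,c_2\ne 0$ since $f_{01},f_{02}\ne 0$. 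Using that $F$ has zero diagonal and $D_1$ is diagonal, the $i=1,2$ conditions reduce to $(\lambda_1 - s_1 d_{11})c_1 = f_{12}c_2$ and $(\lambda_1 - s_1 d_{22})c_2 = f_{12}c_1$, a linear system for $(\lambda_1,s_1)$ of determinant $c_1 c_2(d_{11}-d_{22})\ne 0$ by A2. This determines $\lambda_1,s_1$, after which Lemma~\ref{lem:uncp} gives $\bm{y}_1 = \bm{y}_1^{(p)} + \gamma_1\bm{y}_0^\perp$ with $\gamma_1$ still free.

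For the inductive step, assume everything is solved through order $n-1$ with $\gamma_{n-1}$ free. Writing $\bm{y}_{n-1} = \bm{y}_{n-1}^{(p)} + \gamma_{n-1}\bm{y}_0^\perp$, the dependence of $\bm{g}_n$ on the three current unknowns is $\lambda_n\bm{y}_0 - s_n D_1\bm{y}_0 + \gamma_{n-1}\bigl[(\lambda_1 - s_1 D_1)\bm{y}_0^\perp - F\bm{y}_0^\perp\bigr]$, all else known, and the resulting $3\times3$ system is block triangular. The $i=0$ condition again annihilates the $\lambda_n,s_n$ and diagonal contributions, leaving $-\gamma_{n-1}\langle\bm{\phi}_0,F\bm{y}_0^\perp\rangle$ plus known terms; since the normalized $\bm{y}_0^\perp = f_{01}\bm{\phi}_1 + f_{02}\bm{\phi}_2$ gives $\langle\bm{\phi}_0,F\bm{y}_0^\perp\rangle = f_{01}^2 + f_{02}^2 = 1 \ne 0$ by (\ref{normphi0}), $\gamma_{n-1}$ is determined. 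With $\gamma_{n-1}$ known, the $i=1,2$ conditions are a linear system for $(\lambda_n,s_n)$ with the same determinant $c_1 c_2(d_{11}-d_{22})\ne 0$. Lemma~\ref{lem:uncp} then yields the unique $\bm{y}_n^{(p)}\perp\bm{\phi}_1,\bm{\phi}_2$ and $\bm{y}_n = \bm{y}_n^{(p)} + \gamma_n\bm{y}_0^\perp$ closes the induction. Reality is preserved since every datum, inner product, and coefficient is real. The BIC property $y_{0,n}\equiv 0$ for $x>L$ is maintained because $D_1\equiv 0$, $F\equiv 0$ there and the zeroth component of $\bm{y}_0$ vanishes, so the zeroth component of $\bm{g}_n$ vanishes for $x>L$, whence the scalar BVP for $y_{0,n}$ forces $y_{0,n}\equiv 0$ there.

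I expect the main obstacle to be the bookkeeping that exposes this block-triangular structure: verifying at every order that the $i=0$ condition isolates $\gamma_{n-1}$ (the direction of $\bm{y}_0$ at $n=1$) while the $i=1,2$ conditions isolate $\lambda_n,s_n$ once $\gamma_{n-1}$ is known. This staggering—the free constant of order $n-1$ being resolved by the order-$n$ solvability condition—must be tracked carefully, and one must confirm that the two nonvanishing determinant conditions reduce exactly to A3 (through $f_{01}^2+f_{02}^2=1$ and $f_{01},f_{02}\ne 0$) and to A2 (through $d_{11}\ne d_{22}$ and $c_1,c_2\ne 0$). Everything else—existence and uniqueness of $\bm{y}_n^{(p)}$ and the $H^2$ estimate—is furnished directly by Lemma~\ref{lem:uncp}.
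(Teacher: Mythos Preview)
Your proposal is correct and follows essentially the same approach as the paper: the solvability conditions from Lemma~\ref{lem:uncp} fix ${\bm y}_0$ via $c_1 f_{01}+c_2 f_{02}=0$, then $(\lambda_n,s_n)$ via the $2\times 2$ system with determinant $c_1c_2(d_{11}-d_{22})$, and $\gamma_{n-1}$ via $\langle{\bm\phi}_0,{\bm g}_n\rangle=0$ at the next order. The paper packages this through auxiliary quantities ${\bm h}_n$, ${\bm h}_n^{(p)}$ and the matrix $A$, but your explicit identification of $\langle{\bm\phi}_0,F{\bm y}_0^\perp\rangle=f_{01}^2+f_{02}^2=1$ as the pivot isolating $\gamma_{n-1}$ is precisely what underlies the paper's formula $\gamma_{n-1}=\langle{\bm\phi}_0,{\bm h}_n^{(p)}\rangle$.
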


\begin{proof}
Using the power series of $s$, the potential of $V$ in  (\ref{eq:s}) can
be written as 
\[
  V = D_0+(s-s_0)D_1+\delta F
  = D_0 + (s_1 D_1 + F) \delta + s_2 D_1 \delta^2 + s_3 D_1 \delta^3 +
  \dots
\]
Inserting the power series of ${\bm y}$, $\lambda$, and $V$ above
into Eq.~(\ref{eq:s}), we obtain Eq.~(\ref{eq:ordern}) from the  ${\cal
  O}(\delta^n)$ terms.  If ${\bm y}_0$ is real and satisfies (\ref{eq:ordern})
and (\ref{bcyj}) for $n=0$, then 
\begin{equation}
{\bm y}_0 = C_1{\bm \phi}_1+C_2{\bm \phi}_2, 
\end{equation}
where $C_1$ and $C_2$ are real constants.
From Lemma 5.3, we know that Eq.~(\ref{eq:ordern}) has a solution satisfying
boundary condition (\ref{bcyj}) if and only if its right hand side
${\bm g}_n$ satisfies 
$\langle {\bm \phi}_i,
{\bm g}_n \rangle = 0$ for $i=0, 1, 2$.

For $n=1$, we have ${\bm g}_1 = (\lambda_1  - s_1 D_1 - F) {\bm
  y}_0$. The condition $\langle {\bm \phi}_0, {\bm g}_1 \rangle = 0$
leads to 
\[
  C_1 f_{01} + C_2 f_{02} = 0.
\]
Since the BIC, if it exists, can be scaled by an arbitrary constant, we can choose the following solution
\begin{equation}
  \label{solC1C2}
  C_1 = f_{02}, \quad C_2 = - f_{01}.  
\end{equation}
Due to the scaling of ${\bm \phi}_0$ in assumption A3, we have 
$C_1^2+C_2^2 = 1$. Therefore,  $\| {\bm y}_0 \| = 1$.
The other two conditions, $\langle {\bm \phi}_i, {\bm g}_1 \rangle =
0$ for $i=1, 2$, give rise to the following system
\begin{equation}
  \label{lams1}
  A  \begin{bmatrix}
\lambda_1\\
s_1 
\end{bmatrix}
=   \begin{bmatrix}
  f_{11} & f_{12}\\
  f_{21} &  f_{22}
\end{bmatrix}
\begin{bmatrix} 
C_1 \\
C_2 
\end{bmatrix},
\end{equation}
where
\begin{equation}
  \label{detmatA}
  A =
  \begin{bmatrix}
    C_1 & 0 \cr
    0 & C_2 
  \end{bmatrix}
\begin{bmatrix} 
  1 & -d_{11}\\
  1 &- d_{22}
\end{bmatrix}.  
\end{equation}
Assumptions A2 and A3 ensure that matrix $A$ is real and
invertible, thus, $\lambda_1$ and $s_1$ can be solved and they are
real. With the above $\lambda_1$ and $s_1$, Eq.~(\ref{eq:ordern}) is
solvable, but the solution is not unique. If ${\bm y}_1$ is a
solution, ${\bm y}_1 $ plus any linear combination of ${\bm \phi}_1$
and ${\bm \phi}_2$ is also a solution. Therefore, we can find a
particular solution ${\bm y}_1^{(p)}$ that is orthogonal to both ${\bm
  \phi}_1$ and ${\bm \phi}_2$. Any solution orthogonal to ${\bm
  y}_0$ can be written as 
\begin{equation}
  \label{eq:y1}
{\bm y}_1 =  {\bm y}_1^{(p)}+\gamma_1{\bm y}_0^\perp, 
\quad \mbox{where} \quad 
{\bm y}_0^\perp = C_2{\bm \phi}_1 -C_1{\bm \phi}_2 
\end{equation}
and $\gamma_1$ is some constant.

For $n\ge 2$, we already have ${\bm y}_{n-1} = {\bm y}_{n-1}^{(p)} +
\gamma_{n-1} {\bm y}_0^\perp$,   where $\gamma_{n-1}$ is still
undetermined.
If we rewrite  the right hand side of Eq.~(\ref{eq:ordern}) as 
$ {\bm g}_n =   ( \lambda_n - s_n D_1) {\bm y}_0 -  {\bm h}_n$, where
\begin{equation}
  \label{defhn}
  {\bm h}_n =   F {\bm     y}_{n-1} +   \sum_{i=1}^{n-1} ( s_i D_1 - \lambda_i ) 
{\bm y}_{n-i}, 
\end{equation}
and define
\begin{equation}
  \label{hjp}
  {\bm h}_n^{(p)}  :=
  F {\bm    y}_{n-1}^{(p)}+   \sum_{i=1}^{n-1} ( s_i D_1 - \lambda_i ) 
  {\bm y}_{n-i}^{(p)},   
\end{equation}
then the first solvability condition
$\langle {\bm \phi}_0, {\bm g}_n \rangle = 0$ gives rise to 
\begin{equation}
\label{eq:gamman}
\gamma_{n-1}= \langle {\bm \phi}_0,  {\bm h}_n^{(p)} \rangle.
\end{equation}
The other two conditions, $\langle {\bm \phi}_i, {\bm 
  g}_n \rangle = 0$ for $i=1$ and $2$, lead to 
\begin{equation}
\label{eq:coeffn}
A \begin{bmatrix}
\lambda_n \\
s_n
\end{bmatrix}
=
\begin{bmatrix}
\langle {\bm \phi}_1,  {\bm h}_n  \rangle
\\
\langle {\bm \phi}_2,  {\bm h}_n  \rangle
\end{bmatrix}. 
\end{equation}

Notice that Eq.~(\ref{eq:coeffn}) is also valid for the case of $n=1$
where ${\bm h}_1 = F {\bm   y}_0$. Since all three conditions on ${\bm
  g}_n$ are satisfied, Eq.~(\ref{eq:ordern}) with boundary condition
(\ref{bcyj}) is solvable. The solution ${\bm y}_n$ orthogonal to ${\bm y}_0$ is
given in Eq.~(\ref{solyj}), where $\gamma_n = \langle {\bm \phi}_0,
{\bm h}_{n+1}^{(p)} \rangle$, and ${\bm h}_{n+1}^{(p)}$ is
related to ${\bm y}_n^{(p)}$, ${\bm y}_{n-1}^{(p)}$,  ... , ${\bm y}_1^{(p)}$. 
\end{proof}

\subsection{Convergence}

In this subsection, we show that for sufficiently small $\delta$ (in
absolute value), all power series in (\ref{eq:series}) converge. For that
purpose, we first establish some results about the norms of  the solution ${\bm
  y}_n$, the right hand side
${\bm   g}_n$, and the related ${\bm h}_n$ defined in (\ref{defhn}).
\begin{lemma}
  \label{lem:3}
  If assumptions {\rm A1}, {\rm A2} and {\rm A3} are true, then
  for any $n\ge 1$,   the 
right hand side ${\bm g}_n$ and the solution ${\bm y}_n$ of
Eq.~(\ref{eq:ordern}), and  ${\bm h}_n$ defined in (\ref{defhn}),  satisfy  
\begin{eqnarray}
\label{eq:hiter}
 &&  \|{\bm g}_n\| \le  M_1\|{\bm h}_n\|, \\
&&    \|{\bm h}_n\| \le 
   M_2\left(\|{\bm y}_{n-1}\|+\sum_{i=2}^{n-1}\|{\bm y}_{n-i}\|\cdot\|{\bm h}_i\|\right), \\
& & \|{\bm y}_n\|_{H^2}  \le 
   M_3 \left(\|{\bm h}_n\|+\sum_{i=2}^{n-1}\|{\bm h}_i\|\cdot\|{\bm h}_{n+1-i}\|\right),  
\end{eqnarray}
where $M_1$, $M_2$, and $M_3$ are constants independent  of $n$.   
\end{lemma}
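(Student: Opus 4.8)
The plan is to reduce all three estimates to two uniform facts and then read off the inequalities from the explicit formulas (\ref{defhn}), (\ref{hjp}) and (\ref{solyj}) by the triangle inequality and Cauchy--Schwarz. First I would record the uniform constants. Since $F$ and $D_1$ are piecewise smooth and vanish for $x\ge L$ (by (\ref{DFinf})), the multiplication operators they induce are bounded on $L^2(\mathbb{R}^+)$; write $c_F=\sup_x\|F(x)\|$ and $c_D=\sup_x\|D_1(x)\|$, both finite and $n$-independent. Next, the matrix $A$ of (\ref{detmatA}) is invertible and independent of $n$, and (\ref{eq:coeffn}) reads $A[\lambda_n,\,s_n]^{\sf T}=[\langle{\bm\phi}_1,{\bm h}_n\rangle,\,\langle{\bm\phi}_2,{\bm h}_n\rangle]^{\sf T}$; combining $\|{\bm\phi}_1\|=\|{\bm\phi}_2\|=1$ with Cauchy--Schwarz yields the key uniform bound
\[
|\lambda_n|+|s_n|\le\kappa\,\|{\bm h}_n\|,\qquad \kappa:=\sqrt{2}\,\|A^{-1}\|,\qquad n\ge1.
\]
Finally, Lemma~\ref{lem:uncp} supplies a constant $C$ independent of ${\bm g}_n$ with $\|{\bm y}_n^{(p)}\|_{H^2}\le C\|{\bm g}_n\|$, applicable because by construction ${\bm g}_n$ satisfies $\langle{\bm\phi}_i,{\bm g}_n\rangle=0$ for $i=0,1,2$ and ${\bm y}_n^{(p)}$ is orthogonal to ${\bm\phi}_1,{\bm\phi}_2$.

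With these in hand the first two inequalities are immediate. From ${\bm g}_n=(\lambda_n-s_nD_1){\bm y}_0-{\bm h}_n$ and $\|{\bm y}_0\|=1$, the triangle inequality gives $\|{\bm g}_n\|\le|\lambda_n|+c_D|s_n|+\|{\bm h}_n\|\le(\kappa\max\{1,c_D\}+1)\|{\bm h}_n\|$, which is the first inequality with $M_1:=\kappa\max\{1,c_D\}+1$. For ${\bm h}_n$, I would split (\ref{defhn}) as ${\bm h}_n=(F+s_1D_1-\lambda_1){\bm y}_{n-1}+\sum_{i=2}^{n-1}(s_iD_1-\lambda_i){\bm y}_{n-i}$. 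The first group has a fixed coefficient, so its norm is at most $(c_F+c_D|s_1|+|\lambda_1|)\|{\bm y}_{n-1}\|$; each remaining term obeys $\|(s_iD_1-\lambda_i){\bm y}_{n-i}\|\le(c_D+1)(|s_i|+|\lambda_i|)\|{\bm y}_{n-i}\|\le(c_D+1)\kappa\,\|{\bm h}_i\|\,\|{\bm y}_{n-i}\|$ by the uniform bound above. Taking $M_2$ to be the larger of the two resulting constants yields the second inequality (the empty sums for $n=1,2$ are consistent).

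The third inequality rests on the decomposition ${\bm y}_n={\bm y}_n^{(p)}+\gamma_n{\bm y}_0^\perp$ from (\ref{solyj}). The particular part is controlled by Lemma~\ref{lem:uncp} and the first inequality: $\|{\bm y}_n^{(p)}\|_{H^2}\le C\|{\bm g}_n\|\le CM_1\|{\bm h}_n\|$, so in particular $\|{\bm y}_j^{(p)}\|\le CM_1\|{\bm h}_j\|$ for every $j$. Since ${\bm y}_0^\perp$ is a fixed combination of ${\bm\phi}_1,{\bm\phi}_2$, the factor $\|{\bm y}_0^\perp\|_{H^2}$ is a constant, and it remains to estimate $\gamma_n=\langle{\bm\phi}_0,{\bm h}_{n+1}^{(p)}\rangle$ from (\ref{eq:gamman}). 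I would split (\ref{hjp}) exactly as before: the terms carrying ${\bm y}_n^{(p)}$ (namely $F{\bm y}_n^{(p)}$ and the $i=1$ term) are bounded by $\mathrm{const}\cdot\|{\bm y}_n^{(p)}\|\le\mathrm{const}\cdot\|{\bm h}_n\|$; the endpoint $i=n$ term $(s_nD_1-\lambda_n){\bm y}_1^{(p)}$ is bounded by $(c_D+1)\kappa\|{\bm h}_n\|\,\|{\bm y}_1^{(p)}\|\le\mathrm{const}\cdot\|{\bm h}_n\|$ since $\|{\bm y}_1^{(p)}\|$ is a fixed number; and the interior terms $i=2,\dots,n-1$ produce precisely $(c_D+1)\kappa CM_1\sum_{i=2}^{n-1}\|{\bm h}_i\|\,\|{\bm h}_{n+1-i}\|$. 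Bounding $|\gamma_n|$ as described below and collecting constants gives the third inequality.

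The one genuinely delicate point is the estimate of $\gamma_n=\langle{\bm\phi}_0,{\bm h}_{n+1}^{(p)}\rangle$: because ${\bm\phi}_0$ is the scattering solution (\ref{defphi0}) it is \emph{not} in $L^2(\mathbb{R}^+)$, so the inner product is not a priori controlled by a norm of ${\bm\phi}_0$. The resolution is that only the zeroth component of ${\bm h}_{n+1}^{(p)}$ pairs with $\phi_0$, and this component is supported in $[0,L]$: the entries $\alpha,\beta$ of $F$ and the whole of $D_1$ vanish for $x\ge L$ by (\ref{DFinf}), so $(F{\bm y})_0=\alpha y_1+\beta y_2$ and $(D_1{\bm y})_0$ both vanish there. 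Hence $\langle{\bm\phi}_0,{\bm h}_{n+1}^{(p)}\rangle=\int_0^L\phi_0\,({\bm h}_{n+1}^{(p)})_0\,dx$, and Cauchy--Schwarz gives $|\gamma_n|\le\|\phi_0\|_{L^2(0,L)}\,\|{\bm h}_{n+1}^{(p)}\|$ with a finite, $n$-independent factor. I expect the remaining work to be purely organizational: ensuring every constant is independent of $n$ (which is exactly why the uniform bound on $|\lambda_n|+|s_n|$ through the fixed matrix $A$ is indispensable) and splitting the sums in (\ref{defhn}) and (\ref{hjp}) so that the $i=1$ and $i=n$ endpoint terms fall into the linear term while the interior terms reproduce the quadratic convolution sums on the right-hand sides.
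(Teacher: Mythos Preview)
Your approach is essentially identical to the paper's: the same uniform control of $(\lambda_n,s_n)$ through the fixed invertible matrix $A$, the same endpoint splitting of the sums in (\ref{defhn}) and (\ref{hjp}) so that the $i=1$ and $i=n$ contributions are absorbed into the linear term, and the same localization to $[0,L]$ to make sense of $\langle{\bm\phi}_0,\cdot\rangle$. One correction is needed in that last step: your argument that $({\bm h}_{n+1}^{(p)})_0$ is supported in $[0,L]$ accounts for the $F$ and $D_1$ contributions but overlooks the terms $-\lambda_i({\bm y}_{n+1-i}^{(p)})_0$, which are not killed merely by the vanishing of $F$ and $D_1$. These terms do vanish on $(L,\infty)$, but for a different reason---the zeroth component of every ${\bm y}_j^{(p)}$ is identically zero there by Lemma~5.1 (the BVP for the zeroth component forces it to vanish beyond $L$)---and the paper's proof invokes exactly this fact.
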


\begin{proof}
  Let $B = [ b_{ij}(x)]$ be an $N \times N$ matrix function of $x$, 
  the maximum norm of $B$ is defined as
\[
\|B \|_{\rm max} = \max_{1\le i, j\le N}  \|b_{ij}(x)\|_\infty. 
\]
From Eq.~(\ref{eq:coeffn}) and using condition (\ref{normphi12}), we deduce that 
\begin{equation}
  \label{lams}
\max\{|\lambda_n|,|s_n|\}\le 2\|A^{-1}\|_{\rm max} \, \|{\bm h}_n\|,\quad  n\ge 1.  
\end{equation}
This leads to 
\[
\| \lambda_n I - s_n D_1 \|_{\rm max}\le M_0\|{\bm h}_n\|,  \quad n\ge 1,
\]
where $I$ is the $3 \times 3 $ identity matrix and 
$M_0 = 2(1+\|D_1\|_{\rm max}) \, \|A^{-1}\|_{\rm max}$.
Since $\lambda_n I - s_n D_1$ is a diagonal matrix and $\| {\bm y}_0
\| = 1$, we have  
\[
  \| {\bm g}_n \| \le
  \| ( \lambda_n - s_n D_1) {\bm y}_0 \| + \| {\bm h}_n \|
  \le \| \lambda_n I - s_n D_1 \|_{\rm max} + \| {\bm h}_n \| \le M_1 \| {\bm
    h}_n \|,
\]
where $M_1 = 1 + M_0$.

As given in Eq.~(\ref{defhn}), the first term
of ${\bm h}_n$ is $F {\bm  y}_{n-1}$. Since $F$ has zero diagonals, we
can show that 
\[
    \| F {\bm y}_{n-1} \| \le 2 \| F \|_{\rm max} \,  \| {\bm y}_{n-1}  \|.
\]
However, ${\bm y}_0$ is a unit vector with only two nonzero
components, and we obtain 
\begin{equation}
\label{h1xxx}
\| {\bm h}_1 \| = \| F {\bm y}_0 \| \le
\sqrt{3} \| F \|_{\rm max} \,  \| {\bm   y}_0 \| =
\sqrt{3} \| F \|_{\rm max}.
\end{equation}
For $n\ge 2$, we have 
\begin{eqnarray*}
 \| {\bm h}_n\| &\le & \|F {\bm y}_{n-1}\|+\sum_{i=1}^{n-1}\|  \lambda_i 
I - s_i D_1 \|_{\rm max}  \, \| {\bm y}_{n-i}\| \\
&  \le & 2 \| F \|_{\rm max} \, \| {\bm y}_{n-1}  \| + M_0 \sum_{i=1}^{n-1}
  \| {\bm y}_{n-i} \|  \cdot \| {\bm h}_i \| \\
  &\le &   M_2 \left(  \| {\bm y}_{n-1}\|+\sum_{i=2}^{n-1}\| {\bm
         y}_{n-i}\| \cdot \|  {\bm h}_i\| \right).  
\end{eqnarray*}
where  $M_2 = \max \{ (2 + \sqrt{3} M_0) \| F\|_{\rm max},  M_0 \}$.

To estimate ${\bm y}_n$, we first rewrite $\gamma_n$ as 
\[
  \gamma_n = \langle {\bm \phi}_0, {\bm h}_{n+1}^{(p)} \rangle
  =
  \langle {\bm \phi}_0,   (F + s_1 D_1 - \lambda_1) {\bm 
    y}_{n}^{(p)} \rangle
  + \sum_{i=2}^n \langle {\bm \phi}_0, (s_i D_1 - \lambda_i) {\bm
    y}_{n+1-i}^{(p)} \rangle.
\]
Since $F(x)$ and the first component of ${\bm y}_n^{(p)}$ vanish for
all $x > L$, we can show that
\begin{equation}
  \label{getK1}
|   \langle {\bm \phi}_0,   (F + s_1 D_1 - \lambda_1) {\bm 
  y}_{n}^{(p)} \rangle   |
\le K_1 \| {\bm y}_n^{(p)} \|_{L},  
\end{equation}
where $\| \cdot \|_L$ denotes the $L^2$ norm on $(0,L)$, i.e., 
\[
 \| {\bm  f} \|_L = \left \{ \int_0^L [ {\bm f}(x)]^{\sf T} {\bm f}(x)
   \, dx  \right \}^{1/2}, 
\]
and $K_1 = (\sqrt{2}+ \sqrt{3} M_0) \| {\bm \phi}_0 \|_L \, \|
F\|_{\rm max}$.  Therefore, 
\begin{eqnarray*}
  |\gamma_n | &\le &  K_1\| {\bm y}_n^{(p)} \| + \sum_{i=2}^{n}
                     \| {\bm \phi}_0 \|_L \,
                     \| \lambda_i I - s_i D_1\|_{\rm max} \,
                     \| {\bm y}_{n+1-i}^{(p)}\| \\
              & \le &  K_2
                      \left(  \|{\bm y}_n^{(p)} \|+\sum_{i=2}^{n}\|{\bm 
          h}_i\|\cdot\|{\bm y}_{n+1-i}^{(p)}\|\right), 
\end{eqnarray*}
where $K_2 = \max \{ K_1, M_0 \| {\bm \phi}_0 \|_L \}.$
Consequently,
\begin{eqnarray*}
  \|{\bm y}_n\|_{H^2}  &\le &
                       \|{\bm y}_n^{(p)} \|_{H^2} +|\gamma_n|\cdot\|{\bm y}_0^\perp\|_{H^2}  \\
&\le &   \|{\bm y}_n^{(p)}\|_{H^2} + K_2 \|{\bm y}_0^\perp\|_{H^2}
                      \left(  \|{\bm y}_n^{(p)} \|+\sum_{i=2}^{n}\|{\bm 
          h}_i\|\cdot\|{\bm y}_{n+1-i}^{(p)}\|\right), \\
&\le  &
K_3 \left(  \|{\bm y}_n^{(p)} \|_{H^2}+\sum_{i=2}^{n}\|{\bm 
        h}_i\|\cdot\|{\bm y}_{n+1-i}^{(p)}\|\right),
\end{eqnarray*}
where $K_3 =  1+ K_2 \|{\bm y}_0^\perp\|_{H^2}$. 
Using Lemma 5.3, (\ref{eq:hiter}) and (\ref{h1xxx}), we obtain
\begin{eqnarray*}
  \|{\bm y}_n\|_{H^2}    
& \le &  
K_3 C_g M_1 
 \left(  \| {\bm h}_n\|+\sum_{i=2}^{n}\|{\bm h}_i\|\cdot \| {\bm h}_{n+1-i} \|\right), \\
  &=& 
K_3 C_g M_1 
       \left(  \| {\bm h}_n\|+ \| {\bm h}_n \| \cdot 
       \| {\bm h}_1 \| +
       \sum_{i=2}^{n-1}\|{\bm h}_i\|\cdot  \|  {\bm h}_{n+1-i} \| \right), \\                    
&\le &  M_3 \left(  \| {\bm h}_n\|+\sum_{i=2}^{n-1} \| {\bm h}_i \|
       \cdot  \| {\bm h}_{n+1-i}\|\right), 
\end{eqnarray*}
where $C_g$ is the constant $C$ in (\ref{eq:bdd}) and 
$ M_3=    K_3 C_gM_1 \left( 1+\sqrt{3}\|F\|_{\rm  max} \right)$. 
\end{proof}

Next, we establish bounds for ${\bm y}_n$ and ${\bm h}_n$ using a
single constant $M$ and an integer sequence $(P_k)_{k=0}^\infty$.
\begin{lemma}
  \label{lem:4}
  If assumptions {\rm A1}, {\rm A2} and {\rm A3} are true, then
  the solution ${\bm y}_n$ of Eq.~(\ref{eq:ordern}) and ${\bm h}_n$
  defined in (\ref{defhn}) satisfy 
  \begin{eqnarray}
\label{eq:iter}
& \|{\bm h}_n\|\le M^{2n-2}P_{n-1}, & \quad  n\ge 2, \\
\label{ynP}
    & \|{\bm y}_n\|_{H^2}\le M^{2n-1}P_{n-1}, & \quad n\ge 1,    
\end{eqnarray}
where $M = \max\{ M_2,M_3 \}$, $(P_k)_{k=0}^\infty$ is an integer sequence
defined recursively by
\[
P_0 = 1,\  P_1 = 1,\ P_k = P_{k-1}  +\sum_{i=1}^{k-1}P_{i} (P_{k-i}  +P_{k-i-1}). 
\]
\end{lemma}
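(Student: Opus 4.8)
The plan is to establish the two bounds (\ref{eq:iter}) and (\ref{ynP}) simultaneously by strong induction on $n$, driving the induction with the three recursive estimates of Lemma~\ref{lem:3} and using the integer sequence $(P_k)$ purely as a bookkeeping device. First I would dispose of the base cases directly from Lemma~\ref{lem:3}. For $n=1$ the convolution sum in the third estimate is empty, so $\|{\bm y}_1\|_{H^2}\le M_3\|{\bm h}_1\|$; since $\|{\bm h}_1\|=\|F{\bm y}_0\|\le\sqrt3\,\|F\|_{\rm max}$ (the estimate (\ref{h1xxx}) used in Lemma~\ref{lem:3}) and the constant $M_3$ already carries the factor $1+\sqrt3\,\|F\|_{\rm max}$, this yields $\|{\bm y}_1\|_{H^2}\le M\,P_0=M$. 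Then $\|{\bm h}_2\|\le M_2\|{\bm y}_1\|\le M_2M\le M^2P_1$ and $\|{\bm y}_2\|_{H^2}\le M_3\|{\bm h}_2\|\le M_3M^2\le M^3P_1$, which are exactly the first instances of (\ref{eq:iter}) and (\ref{ynP}).

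For the inductive step, assume both bounds hold for all indices below $n$. I would first estimate $\|{\bm h}_n\|$. Substituting the inductive bounds $\|{\bm y}_{n-i}\|_{H^2}\le M^{2(n-i)-1}P_{n-i-1}$ and $\|{\bm h}_i\|\le M^{2i-2}P_{i-1}$ into the second estimate of Lemma~\ref{lem:3}, every product carries the common factor $M^{2n-3}$, and after the reindexing $j=i-1$ the sum becomes a discrete convolution, giving
\[
\|{\bm h}_n\|\le M_2M^{2n-3}\Big(P_{n-2}+\sum_{j=1}^{n-2}P_jP_{n-2-j}\Big)=:M_2M^{2n-3}Q_{n-1}.
\]
Since $M_2\le M$, this already gives $\|{\bm h}_n\|\le M^{2n-2}Q_{n-1}$, and comparing $Q_{n-1}$ with the recursion for $P_{n-1}$ shows $P_{n-1}-Q_{n-1}=\sum_{i=1}^{n-2}P_iP_{n-1-i}\ge0$, so (\ref{eq:iter}) follows. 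The essential point is that I would retain the sharp convolution form $Q_{n-1}$ rather than collapsing it to $P_{n-1}$.

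Next I would estimate $\|{\bm y}_n\|_{H^2}$ from the third estimate of Lemma~\ref{lem:3}, inserting the sharp bound $\|{\bm h}_n\|\le M^{2n-2}Q_{n-1}$ for the leading term and the inductive bounds $\|{\bm h}_i\|\,\|{\bm h}_{n+1-i}\|\le M^{2n-2}P_{i-1}P_{n-i}$ for the quadratic cross terms. After reindexing, the resulting bracket is
\[
Q_{n-1}+\sum_{j=1}^{n-2}P_jP_{n-1-j}=P_{n-2}+\sum_{j=1}^{n-2}P_jP_{n-2-j}+\sum_{j=1}^{n-2}P_jP_{n-1-j},
\]
which is exactly the right-hand side of the defining recursion $P_{n-1}=P_{n-2}+\sum_{i=1}^{n-2}P_i(P_{n-1-i}+P_{n-2-i})$. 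Hence $\|{\bm y}_n\|_{H^2}\le M_3M^{2n-2}P_{n-1}\le M^{2n-1}P_{n-1}$, closing the induction.

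The main obstacle is precisely this last bookkeeping. The piece $P_{n-2}+\sum_jP_jP_{n-2-j}$ contributed by $\|{\bm h}_n\|$ and the piece $\sum_jP_jP_{n-1-j}$ contributed by the cross terms $\|{\bm h}_i\|\,\|{\bm h}_{n+1-i}\|$ are the two halves of the $P$-recursion, and they reassemble into $P_{n-1}$ only when $\|{\bm h}_n\|$ is kept in its sharp form; replacing it by the weaker bound $M^{2n-2}P_{n-1}$ would instead produce a bracket close to $2P_{n-1}$ and force the stronger requirement $M\ge 2M_3$, so the induction would fail to close with $M=\max\{M_2,M_3\}$. Everything else—tracking the powers of $M$, the repeated use of $M_2,M_3\le M$, and the two reindexings of the convolution sums—is routine.
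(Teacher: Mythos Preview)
Your inductive scheme is exactly the paper's: strong induction driven by the three estimates of Lemma~\ref{lem:3}, with the key device of retaining the sharp convolution form $Q_{n-1}=P_{n-2}+\sum_{j=1}^{n-2}P_jP_{n-2-j}$ for $\|{\bm h}_n\|$ so that, when combined with the quadratic cross terms from the $\|{\bm y}_n\|$ estimate, the two pieces reassemble into precisely the recursion defining $P_{n-1}$. Your closing remark about why one must not collapse $Q_{n-1}$ to $P_{n-1}$ prematurely is also the point the paper relies on implicitly.

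The one step that does not go through as written is your base case $n=1$. From $\|{\bm y}_1\|_{H^2}\le M_3\|{\bm h}_1\|$ and $\|{\bm h}_1\|\le\sqrt3\,\|F\|_{\rm max}$ you cannot conclude $M_3\|{\bm h}_1\|\le M$: the factor $1+\sqrt3\,\|F\|_{\rm max}$ inside $M_3$ multiplies rather than cancels, so $M_3\|{\bm h}_1\|$ can exceed $M_3$ whenever $\sqrt3\,\|F\|_{\rm max}>1$. What actually works---and what the paper does---is to go back inside the proof of Lemma~\ref{lem:3}: for $n=1$ the sum $\sum_{i=2}^{n}$ there is empty, giving the sharper intermediate bound $\|{\bm y}_1\|_{H^2}\le K_3C_gM_1\|{\bm h}_1\|$, and then $K_3C_gM_1\cdot\sqrt3\,\|F\|_{\rm max}<K_3C_gM_1(1+\sqrt3\,\|F\|_{\rm max})=M_3\le M$. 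Your parenthetical about $M_3$ ``already carrying'' that factor shows you have located the slack correctly; you just need to invoke the pre-$M_3$ estimate rather than the stated inequality of Lemma~\ref{lem:3}.
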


\begin{proof}
  We prove by induction. For $n=1$, since
  ${\bm y}_1 = {\bm y}_1^{(p)} + \gamma_1 {\bm y}_0^\perp$ and 
  \[
    \gamma_1 = \langle {\bm \phi}_0, {\bm h}_2^{(p)} \rangle
  = \langle {\bm \phi}_0, (F + s_1 D_1 -\lambda_1) {\bm y}_1^{(p)}
  \rangle,
\]
we have, as a special case of (\ref{getK1}),
$  |\gamma_1| \le K_1 \| {\bm y}_1^{(p)} \|_L \le K_1 \| {\bm
  y}_1^{(p)} \|_{H^2}$.
Therefore
\[
\| {\bm y}_1\|_{H^2} \le \|{\bm y}_1^p\|_{H^2}+|\gamma_1|\cdot\|{{\bm y}_0^\perp}\|_{H^2}
\le \left( 1 + K_1 \|{{\bm y}_0^\perp}\|_{H^2} \right)
\| {\bm y}_1^{(p)} \|_{H^2}.
\]
By Lemma 5.3 and denoting the constant by $C_g$, we have $\| {\bm
  y}_1^{(p)} \|_{H^2} \le C_g \| {\bm   g}_1 \|$, where
${\bm g}_1 = (\lambda_1 - s_1 D_1 - F) {\bm y}_0$. As shown in the proof of
Lemma 6.2,  $  \| {\bm g}_1 \| \le M_1 \|{\bm h}_1 \| \le \sqrt{3} \| F\|_{\rm
  max} M_1$. Therefore,
\[
  \| {\bm y}_1\|_{H^2} \le \sqrt{3} \| F\|_{\rm max} \, M_1 C_g  \left( 1 +
    K_1 \|{{\bm y}_0^\perp}\|_{H^2} \right) < M_3 \le M P_0, 
\]
where $P_0=1$. For $n= 2$,  by Lemma 6.2, we have
\[
  \| {\bm h}_2\| \le M_2 \| {\bm y}_1\| \le M^2 P_1,
  \quad \| {\bm   y}_2\|_{H^2} \le M_3 \| {\bm h}_2\| \le M^3 P_1, 
\]
where $P_1=1$. 

Now suppose (\ref{eq:iter}) and (\ref{ynP}) are valid for all $n\le k$,
  then for $n=k+1$, we have
\[
 \|{\bm h}_{k+1}\| \le 
   M_2 \left(\|{\bm y}_{k}\|+\sum_{i=2}^k\|{\bm y}_{k+1-i}\|\cdot \|{\bm h}_i\|\right) 
   \le M^{2k}\left(P_{k-1}+\sum_{i=2}^kP_{k-i}P_{i-1}\right).    
 \]
For ${\bm y}_{k+1}$, we use the above and get 
\begin{eqnarray*}
 \|{\bm y}_{k+1}\|_{H^2} &\le &  M_3
   \left(\|{\bm h}_{k+1}\|+\sum_{i=2}^{k}\|{\bm h}_i\|\cdot\|{\bm h}_{k+2-i}\|\right)\\
&\le&   M^{2k+1}\left[
      P_{k-1}+\sum_{i=2}^{k}P_{i-1}(P_{k+1-i}+P_{k-i})\right]
      = M^{2k+1} P_k. 
  \end{eqnarray*}
Since the right hand side of the inequality for $\| {\bm h}_{k+1}\|$
can also be bounded by $M^{2k}P_k$, both (\ref{eq:iter}) 
  and (\ref{ynP})  are valid for $n = k+1$.  
\end{proof}

Finally, we establish the convergence of the power series
(\ref{eq:series}). 
\begin{lemma}
  If assumptions {\rm A1}, {\rm A2} and {\rm A3} are true, then for
  any sufficiently small $\delta$ (in absolute value), all power series in
  (\ref{eq:series}) converge. 
\end{lemma}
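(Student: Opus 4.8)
The plan is to reduce the convergence of all three series in (\ref{eq:series}) to a single question about the growth rate of the integer sequence $(P_k)$, and then to settle that question with a generating-function computation. By Lemma~\ref{lem:4} we already have $\|{\bm y}_n\|_{H^2}\le M^{2n-1}P_{n-1}$ for $n\ge 1$ and $\|{\bm h}_n\|\le M^{2n-2}P_{n-1}$ for $n\ge 2$, while (\ref{lams}) gives $\max\{|\lambda_n|,|s_n|\}\le 2\|A^{-1}\|_{\rm max}\,\|{\bm h}_n\|$. Hence each coefficient is controlled, up to a fixed constant, by $M^{2n}P_{n-1}$, and the root test applies: if $\ell:=\limsup_{k\to\infty}P_k^{1/k}<\infty$, then $\limsup_n\big(\|{\bm y}_n\|_{H^2}|\delta|^n\big)^{1/n}\le M^2\ell\,|\delta|$, and likewise for the two scalar series. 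Consequently the ${\bm y}$-series converges absolutely in $H^2$ and the $\lambda$- and $s$-series converge absolutely whenever $|\delta|<1/(M^2\ell)$. So it suffices to show $\ell<\infty$, i.e., that $(P_k)$ grows at most exponentially.

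To control $(P_k)$, I would introduce the generating function $P(z)=\sum_{k\ge0}P_kz^k$ and translate the recursion $P_k=P_{k-1}+\sum_{i=1}^{k-1}P_i(P_{k-i}+P_{k-i-1})$ (valid for $k\ge1$, the empty sum giving $P_1=P_0=1$) into a closed functional equation. Multiplying by $z^k$ and summing, the term $P_{k-1}$ contributes $zP(z)$, the convolution $\sum_iP_iP_{k-i}$ contributes $(P(z)-1)^2$, and $\sum_iP_iP_{k-i-1}$ contributes $z(P(z)-1)P(z)$; since the left-hand side is $P(z)-1$, collecting terms yields the quadratic relation
\[
(1+z)\,P(z)^2-3\,P(z)+2=0 .
\]
Because $P(0)=P_0=1$, the relevant root is $P(z)=\dfrac{3-\sqrt{1-8z}}{2(1+z)}$, which is analytic on the disk $|z|<1/8$ (there $1-8z$ lies in the open right half-plane, so the principal square root is analytic, and $1+z\ne0$) and has a branch-point singularity at $z=1/8$. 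Hence the radius of convergence of $\sum_kP_kz^k$ is exactly $1/8$, giving $\ell=\limsup_kP_k^{1/k}=8$.

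Combining the two steps, the coefficient series converge absolutely for $|\delta|<1/(8M^2)$, which proves the lemma. I expect the analysis of $(P_k)$ to be the only genuine obstacle: a naive induction of the form $P_k\le c^k$ fails because the convolution sums produce a spurious factor of order $k$, so the precise exponential rate is really a generating-function (or complex-analytic) fact rather than an elementary inductive one. Everything else — passing from the $n$-dependent bounds of Lemma~\ref{lem:4} and (\ref{lams}) to a geometric majorant, and invoking the root test (with the harmless constant prefactors washing out under the $n$-th root) — is routine once $\ell<\infty$ is established.
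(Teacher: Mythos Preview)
Your argument is correct and takes a genuinely different route from the paper's proof. The paper does not analyze $(P_k)$ directly; instead it observes that $(P_k)$ is increasing, so $P_{n-i-1}\le P_{n-i}$ and hence $P_n\le 3\sum_{i=1}^{n-1}P_iP_{n-i}$, and then dominates $P_n$ by a sequence $Q_n$ satisfying $Q_n=3\sum_{i=1}^{n-1}Q_iQ_{n-i}$, which it recognizes as $Q_n=3^{n-1}q_{n-1}$ with $q_n$ the Catalan numbers. The known asymptotic $q_n\sim 4^n/(n^{3/2}\sqrt{\pi})$ then yields convergence for $|\delta|<1/(12M^2)$. Your generating-function computation is more direct and, because it solves the recursion for $(P_k)$ exactly rather than passing through a crude majorant, gives the sharper threshold $|\delta|<1/(8M^2)$. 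The paper's approach has the advantage of avoiding any complex-analytic input---it only needs the closed form and growth rate of the Catalan numbers---whereas yours identifies the true exponential rate of $(P_k)$ at the cost of a quadratic in $P(z)$ and a radius-of-convergence argument. Both are short; yours is arguably cleaner once one is comfortable with the method, and your remark that naive induction $P_k\le c^k$ fails (due to the $O(k)$ factor from the convolution) is exactly the reason the paper passes to the Catalan comparison rather than attempting a direct bound.
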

\begin{proof} It is clear that $(P_n)_{n=0}^\infty$ is an increasing sequence. Therefore,
  \[
    P_n = P_{n-1}+\sum_{i=1}^{n-1}P_{i}(P_{n-i}+P_{n-i-1})\le
    3\sum_{i=1}^{n-1}P_{i}P_{n-i}.
  \]
  If we define a sequence $( Q_n )_{n=0}^\infty$ by $Q_0=Q_1=1$ and
  \[
    Q_n = 3   \sum_{i=1}^{n-1}Q_{i}Q_{n-i}, \quad n\ge 2,
  \]
  then $P_n \le Q_n$ for all $n\ge 0$. It is easy to see that for
  $n\ge 1$, $Q_n = 3^{n-1} q_{n-1}$, where $q_n$ is Catalan number given by 
\[
q_n = \frac{1}{n+1}\binom{2n}{n},\quad q_n\sim \frac{4^n}{n^{3/2}\sqrt{\pi}}.  
\]
If $|\delta| < 1/(12M^2)$,  
\[
\sum_{n=k+1}^{\infty}\|{\bm y}_n\|_{H^2}\delta^n  \le
\sum_{n=k+1}^{\infty}\frac{M^{2n-1}Q_{n-1} }{12^n M^{2n}} \le
\frac{1}{9M} \sum_{n=k+1}^{\infty} \frac{ q_{n-2}}{4^n}\to 0,\quad  k
\to+\infty, 
\]
thus the power series of ${\bm y}$ in (\ref{eq:series})
converges. For the same $\delta$, due to inequalities (\ref{lams}) and (\ref{eq:iter}),
the power series for $\lambda$ and $s$ also converge. 
\end{proof}

\subsection{Main result}
We summarize our result as follows. 
\begin{theorem}
\label{thm:main}
Consider the system of three  Schr\"{o}dinger equations (\ref{eq:s}) with
boundary condition (\ref{bcat0}) and a matrix potential $V(x) = D(x;
s) + \delta F(x)$, where $s$ is a parameter, $D$ is a diagonal matrix
function, $F$ is a real symmetric matrix function with zero diagonals,
$\delta$ is the amplitude of the perturbation, $D$ and $F$ satisfy
Eqs.~(\ref{def:v1}) and (\ref{DFinf}) for some 
$L > 0$. If assumptions {\rm A1}, {\rm A2} and {\rm A3} are 
true, then for any sufficiently small $\delta$ (in absolute value), 
the system has a BIC. 
\end{theorem}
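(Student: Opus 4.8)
The plan is to treat this theorem as the assembly of the two preceding results: the power-series construction (the first theorem of this section), which produces real coefficients $\{{\bm y}_n, \lambda_n, s_n\}$ solving (\ref{eq:ordern})--(\ref{bcyj}) at every order, and the convergence lemma above, which guarantees that $\sum_n \|{\bm y}_n\|_{H^2} |\delta|^n$, $\sum_n |\lambda_n| |\delta|^n$ and $\sum_n |s_n| |\delta|^n$ are all finite once $|\delta| < 1/(12M^2)$. For such $\delta$ I would \emph{define} ${\bm y}$, $\lambda$, $s$ by the sums in (\ref{eq:series}) and then verify in turn that $\{{\bm y}, \lambda\}$ (at the constructed $s$) is a genuine BIC of the coupled system (\ref{eq:s}): that ${\bm y}$ is a nonzero $H^2$ solution, that $\lambda$ is real and lies in $(0,1)$, and that $y_0 \equiv 0$ for $x > L$.

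First I would record that absolute convergence in the $H^2$ norm lets me differentiate the series for ${\bm y}$ term by term, so that ${\bm y}'' = \sum_n {\bm y}_n'' \delta^n$ in $L^2(\mathbb{R}^+)$ and ${\bm y} \in H^2(\mathbb{R}^+)$; in particular ${\bm y}(x) \to {\bf 0}$ as $x \to +\infty$ and ${\bm y} \in L^2$. Next I would multiply the order-$n$ equation (\ref{eq:ordern}) by $\delta^n$ and sum: the left side reassembles into $-{\bm y}'' + D_0 {\bm y} - \lambda_0 {\bm y}$, while the right side $\sum_n {\bm g}_n \delta^n$, read off from (\ref{rhsgj}), is a sum of Cauchy products whose rearrangement is justified by the absolute convergence above, and the products collapse exactly into $[(\lambda - \lambda_0) - (s-s_0) D_1 - \delta F] {\bm y}$. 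Rearranging gives $-{\bm y}'' + [D_0 + (s-s_0) D_1 + \delta F] {\bm y} = \lambda {\bm y}$, which is precisely (\ref{eq:s}) with $V = D(x;s) + \delta F$ by (\ref{Dlinear}); the condition ${\bm y}'(0) = {\bf 0}$ passes to the limit from (\ref{bcyj}) term by term. Reality of $\lambda$ and $s$ follows from the reality of every coefficient, and since $\lambda \to \lambda_0 \in (0,1)$ as $\delta \to 0$, continuity forces $\lambda \in (0,1)$ for $|\delta|$ small. Nontriviality is equally cheap: $\|{\bm y}_0\| = 1$ while $\|{\bm y} - {\bm y}_0\| \le \sum_{n\ge 1} \|{\bm y}_n\|_{H^2} |\delta|^n = O(\delta)$, so ${\bm y} \ne {\bf 0}$ once $\delta$ is small.

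The one structurally essential point — the feature that distinguishes a true BIC of (\ref{eq:s}) from the approximate-model solution of Section 3 — is that the radiation component $y_0$ must vanish identically for $x > L$. I would establish this by induction on the order: since $F$ and $D_1$ both vanish for $x \ge L$ (by (\ref{DFinf}) and (\ref{Dlinear})), the first component of each ${\bm g}_n$ vanishes for $x > L$ whenever the first components of all lower-order ${\bm y}_m$ do, and then the scalar lemma for (\ref{BVP:0}) forces the first component of ${\bm y}_n$ to vanish for $x > L$ as well; the base case is immediate because ${\bm y}_0$ is a combination of ${\bm \phi}_1, {\bm \phi}_2$ with no $y_0$ part. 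The uniformly convergent sum therefore has $y_0 \equiv 0$ for $x > L$, and combined with the exponential decay of $y_1, y_2$ (evanescent because $\lambda < 1$) this exhibits ${\bm y}$ as a bounded, decaying, real eigenfunction whose real eigenvalue is embedded in the continuous spectrum, i.e.\ a BIC. I expect the \emph{main obstacle} to be bookkeeping rather than analysis: carefully justifying the term-by-term differentiation and the Cauchy-product rearrangement that turn the convergent formal series into an honest solution of (\ref{eq:s}), and confirming that the vanishing of $y_0$ beyond $L$ survives passage to the limit. The genuinely hard analytic labor — the a priori $H^2$ bounds feeding the Catalan-number convergence estimate — has already been discharged in the two convergence lemmas, so the theorem itself reduces to summing the series and reading off the defining properties of a BIC.
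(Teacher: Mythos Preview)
Your proposal is correct and follows the same overall strategy as the paper: the theorem is obtained by combining the power-series construction (Theorem~6.1) with the convergence result (Lemma~6.4). The paper's own proof is a single sentence stating this, whereas you spell out the verification that the limit is an honest BIC (termwise differentiation in $H^2$, Cauchy-product rearrangement, reality and range of $\lambda$, nontriviality, vanishing of $y_0$ for $x>L$); these are precisely the details the paper leaves implicit, and your treatment of them is sound.
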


Recall that assumption A1 states that $D$ depends on $s$
linearly, A2 gives the crossing point $(s_0, \lambda_0)$ of two bound
stats of the uncoupled system (i.e. $\delta=0$), and A3 ensures 
that $F$ induces coupling between the bound states and the 
scattering solution of the uncoupled system. The BIC of the coupled
system ($\delta \ne 0$ and small) is found for $(s, \lambda)$ near
$(s_0, \lambda_0)$. Theorem 6.5 is a direct
consequence of Theorem 6.1 and Lemmas 6.2--6.4. The BIC and parameter
$s$ are constructed as power series of $\delta$ with details given in the
proof of Theorem 6.1, and  according to Lemma 6.4,  these power series converge when $|\delta|$
is sufficiently small.

\section{Conclusion}
\label{con}

The FW-BICs are an important class of BICs resulting from the
destructive interference of two resonances, and they have been found
in many classical and quantum wave systems~\cite{Review16,Sadr21,kosh23}. 
Friedrich and Wintgen proposed this formation mechanism using a system
of three 1D Schr\"{o}dinger equations~\cite{fried85}, but they only 
analyzed an approximation model and did not show the existence of BICs
in the original system. In this paper, we gave a constructive proof
for the existence of FW-BICs in that Schr\"{o}dinger
system. We assumed that the potential $V$ is a
perturbation of a diagonal matrix function, and clarified the
conditions under which FW-BICs exist. 

Although there are numerous works on BICs, especially in the photonics
literature, their existence has only been established for a few simple
cases, such as the symmetry-protected BICs. The destructive
interference interpretation for FW-BICs is intuitive and useful, but
it does not provide a rigorous justification for their existence. 
FW-BICs appear in many classical wave systems. Further studies are
needed to understand their mathematical properties.

\end{document}